\newcommand\independent{\protect\mathpalette{\protect\independent}{\perp}} 
\def\independent#1#2{\mathrel{\rlap{$#1#2$}\mkern2mu{#1#2}}} 
\newcommand{\supp}{\mathrm{supp}} 
\newcommand{\M}{E_m} 
\newcommand{\F}{\mathbb{F}} 
\newcommand{\mR}{\mathbb{R}} 
\newcommand{\mZ}{\mathbb{Z}}
\newcommand{\pp}{\mathbb{P}}
\newcommand{\E}{\mathbb{E}}
\newcommand{\e}{\varepsilon}
\newcommand{\X}{\mathcal{X}}
\newcommand{\Y}{\mathcal{Y}}
\newcommand{\I}{\mathcal{I}}
\newcommand{\D}{\mathcal{D}}
\newcommand{\C}{\mathcal{C}}
\newcommand{\B}{\mathcal{B}}
\newcommand{\Po}{\mathcal{P}}
\theoremstyle{definition}
\newtheorem{definition}{Definition}
\theoremstyle{plain}
\newtheorem{thm}{Theorem}
\theoremstyle{plain}
\theoremstyle{plain}
\newtheorem{lemma}{Lemma}
\theoremstyle{plain}
\newtheorem{corol}{Corollary}
\theoremstyle{plain}
\theoremstyle{remark}
\newtheorem{remark}{Remark}
\theoremstyle{plain}
\begin{document}
\title{Mutual information, matroids and extremal dependencies  
}
\author{Emmanuel Abbe}
\date{}
\maketitle

\begin{abstract}
In this paper, it is shown that the rank function of a matroid can be represented by a ``mutual information function'' if and only if the matroid is binary. 
The mutual information function considered is the one measuring the amount of information between the inputs (binary uniform) and the output of a multiple access channel (MAC). 
Moreover, it is shown that a MAC whose mutual information function is integer valued is ``equivalent'' to a linear deterministic MAC, in the sense that it essentially contains at the output no more information than some linear forms of the inputs. These notes put emphasis on the connection between mutual information functionals and rank functions in matroid theory, without assuming prior knowledge on these two subjects. The first section introduces mutual information functionals, the second section introduces basic notions of matroid theory, and the third section connects these two subjects.    
It is also shown that entropic matroids studied in the literature correspond to specific cases of MAC matroids.  
\end{abstract}

\section{Information Measures}

\begin{definition} Let $\X$ and $\Y$ be two finite sets called respectively the input and output alphabets and let $M(\X)$ denote the set of probability measures on $\X$.
A channel $W$ with input alphabet $\X$ and output alphabet $\Y$ is a collection of conditional probability measures $\{W(\cdot |x) \in M(\Y): x \in \X\}$.  For fixed alphabets, we denote the set of channels by $M(\Y|\X)$.
\end{definition}

\begin{definition}
The mutual information of a probability measure $\mu \in M(\X \times \Y)$ is defined by
\begin{align*}
& I(\mu) = D(\mu || \mu_\X \times \mu_\Y )= \E_{\mu} \log\frac{\mu}{\mu_\X \times \mu_\Y} ,
\end{align*}
where $\mu_\X$ and $\mu_\Y$ are respectively the marginals in $\X$ and $\Y$ of $\mu$.

If $X$ and $Y$ are two random variables on respectively $\X$ and $\Y$, then $I(X;Y)$ denotes $I(\mu)$ where $\mu$ is the joint distribution of $X,Y$.

If $P \in M(\X)$ is an input distribution and $W \in M(\Y|\X)$ is a channel, then $I(P, W)$ denotes $I(\mu)$ where $\mu= P \circ W$. 

The uniform mutual information (UMI) of a channel $W \in M(\Y|\X)$ is given by $I(W):=I(U_\X \circ W)$, where $U_\X$ is the uniform distribution on $\X$.
\end{definition}





For a given channel $W$ and for any input distribution $P_X$, $I(P_X,W)$ has the following operational meaning in information theory: it is an achievable rate for reliable communication on a discrete memoryless channel with transition probability $W$. In particular, $I(W)$ is an achievable rate and the largest achievable rate is given by the capacity $C=\max_{P \in M(\X)} I(P,W)$.

\begin{definition}
A multiple access channel (MAC) $W$ with $m$ users, input alphabet $\X$ and output alphabet $\Y$, is a channel having input alphabet $\X^m$ and output alphabet $\Y$, i.e., an element of $M(\Y|\X^m)$.
A binary MAC is a MAC for which $\X = \F_2$.
\end{definition}

Let $E_m = \{1,\ldots,m\}$.
\begin{definition}\label{mif}
The mutual information function (MIF) of a MAC $W \in M(\Y|\X^m)$ with input distributions $P_1,\ldots,P_m \in M(\X)$ is defined by the function
\begin{align}
 I(P_1,\ldots,P_m, W)  :\, 2^{E_m} & \rightarrow \mR \notag \\
 S \,\,& \mapsto I[S](P_1,\ldots,P_m,W):=I(X[S];Y,X[S^c]), \label{Ifct}
\end{align}
where $$(X[E_m], Y) \sim (P_1 \times \ldots \times P_m) \circ W.$$
If $P_1=\ldots=P_m=U_\X$, we call this function the uniform mutual information function (UMIF) and we denote it by $I(W)$ (the same notation is used for the single-user mutual information, which is not a conflicting notation since single-user channels correspond to 1-user MACs).
\end{definition}

The operational meaning of the MIF is the following: the region $$\{(R_1,\ldots,R_m) : 0 \leq \sum_{i \in S} R_i \leq  I(X[S];YX[S^c]), \, S \subseteq E_m \}$$ represent achievable rates on a memoryless MAC $W$, when the $m$ users are {\it not} allowed to cooperate during the communication. (If the $m$ users were allowed to cooperate, rates given by $I(P,W)$ for any $P\in M(\X^m)$ would be achievable.) If there are no restriction on the input distributions, the closure of the convex hull of all such regions (for any input distributions) gives the capacity region.

\section{Matroids}
\begin{definition}
A matroid $M$ is an ordered pair $(E, \I)$, where $E$ is a finite set called the ground set and $\I$ is a collection of a subsets of $E$ called the independent sets, which satisfies:
\begin{align*}
 (I1) \quad & \emptyset \in \I .\\
 (I2) \quad & \text{If $I  \in \I$ and $I^\prime \subseteq I$, then $I^\prime \in \I$.} \\
 (I3) \quad & \text{If $I_1,I_2  \in \I$ and $|I_1| < |I_2|$, then there exists an element $e \in I_2 - I_1$} \\
 & \text{such that $I_1 \cup \{e\} \in \I$.} 
\end{align*}
We then say that $M$ is a matroid on $E$ with independent sets $\I$.
\end{definition}
\begin{definition}
Let $M$ be a matroid given by $(E,\I)$. 
\begin{itemize}
\item A basis is a maximal (with respect to the inclusion) subset of $E$ which is independent. The collection of bases is denoted by $\B$. Note that all the subsets of the bases are the independent sets. Hence, a matroid can be defined by its bases. 
\item An dependent set is a subset of $E$ which is not independent. The collection of dependent sets is denoted by $\D=\I^c$.
\item A circuit is a minimal (w.r. to the inclusion) subset of $E$ which is dependent. The collection of circuits is denoted by $\C$.
\end{itemize}
\end{definition}

\begin{definition}
On any matroid $M$, we define a rank function $r: \Po(E) \rightarrow \mZ_+$ such that for any $S \subseteq E$, $r(S)$ is given by the cardinality of a maximal independent set contained in (or equal to) $S$.
\end{definition}
Note: one should check that this is a well defined function, i.e., that any two maximal independent sets in $S$ have the same cardinality. This is actually due to the fact that all the bases in a matroid have the same cardinality. This also implies that $r(E)$ is given by the cardinality of a basis. We denote $R:=r(E)$.
\begin{lemma}
The rank function satisfies the following properties.
\begin{align*}
 (R1) \quad &\text{If $X \subseteq E$, then $r(X)\leq |X|$} .\\
 (R2) \quad &\text{If $X_1 \subseteq X_2 \subseteq E$, then $ r(X_1) \leq r(X_2)$}. \\
 (R3) \quad & \text{If $X_1 , X_2 \subseteq E$, then } \\
 & r(X_1 \cup X_2) + r(X_1 \cap X_2 ) \leq r(X_1) + r(X_2).
\end{align*}
\end{lemma}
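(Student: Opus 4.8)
The plan is to verify each property directly from the definition of $r$, using the augmentation axiom (I3) as the workhorse. Throughout I would fix the convention that for any $X \subseteq E$, a symbol like $B_X$ denotes a maximal independent subset of $X$, so that $r(X)=|B_X|$; the note preceding the lemma guarantees this number does not depend on the choice of $B_X$. Property (R1) is then immediate, since $B_X \subseteq X$ forces $r(X)=|B_X|\leq |X|$. For (R2), with $X_1\subseteq X_2$ I would take a maximal independent subset $B_{X_1}$ of $X_1$; it is also an independent subset of $X_2$, and since by (I3) every independent subset of $X_2$ extends to a maximal one without shrinking, any independent subset of $X_2$ has size at most $r(X_2)$. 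Applying this to $B_{X_1}$ gives $r(X_1)=|B_{X_1}|\leq r(X_2)$.

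The submodularity inequality (R3) is the real content, and the step I expect to require the most care. The approach I would take is the classical ``extend from the intersection'' argument. First choose a maximal independent subset $B_\cap$ of $X_1\cap X_2$, and then invoke (I3) repeatedly to extend it to a maximal independent subset $B_\cup$ of $X_1\cup X_2$, so that by construction $B_\cap\subseteq B_\cup$, with $|B_\cap|=r(X_1\cap X_2)$ and $|B_\cup|=r(X_1\cup X_2)$. Setting $A_i:=B_\cup\cap X_i$, the inclusion $B_\cup\subseteq X_1\cup X_2$ gives $A_1\cup A_2=B_\cup$, and finite inclusion--exclusion yields
\[
|A_1|+|A_2| = |A_1\cup A_2| + |A_1\cap A_2| = |B_\cup| + |B_\cup\cap(X_1\cap X_2)|.
\]
Each $A_i$ is an independent subset of $X_i$, hence $|A_i|\leq r(X_i)$, while the containment $B_\cap\subseteq B_\cup\cap(X_1\cap X_2)$ gives $|B_\cup\cap(X_1\cap X_2)|\geq r(X_1\cap X_2)$. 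Chaining these bounds,
\[
r(X_1\cup X_2) + r(X_1\cap X_2) \leq |A_1|+|A_2| \leq r(X_1)+r(X_2),
\]
which is exactly (R3).

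The subtlety to get right is that one must extend \emph{a basis of the intersection}: it is the containment $B_\cap\subseteq B_\cup$ that forces $|B_\cup\cap(X_1\cap X_2)|\geq r(X_1\cap X_2)$ and thereby lets the intersection rank be bounded from below. Had I instead selected a maximal independent subset of $X_1\cup X_2$ with no relation to $X_1\cap X_2$, there would be no control on $|B_\cup\cap(X_1\cap X_2)|$ from below and the final chain of inequalities would collapse. All the remaining manipulations are routine set arithmetic, so the only genuine input beyond the definitions is the augmentation axiom (I3), used once to justify the extension step.
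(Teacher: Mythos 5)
Your proof is correct. Note that the paper itself states this lemma without proof (it is standard material, deferred to the Oxley reference), so there is no in-paper argument to compare against; what you have written is the canonical textbook proof. Specifically: (R1) and (R2) follow from the definition of $r$ together with its well-definedness (equicardinality of maximal independent subsets, which the paper's preceding note establishes), and your (R3) argument --- extending a maximal independent subset $B_\cap$ of $X_1\cap X_2$ via (I3) to a maximal independent subset $B_\cup$ of $X_1\cup X_2$, setting $A_i=B_\cup\cap X_i$, and combining inclusion--exclusion with the bounds $|A_i|\leq r(X_i)$ and $|B_\cup\cap(X_1\cap X_2)|\geq |B_\cap|=r(X_1\cap X_2)$ --- is exactly the classical argument, and you correctly identify the containment $B_\cap\subseteq B_\cup$ as the step that cannot be dispensed with.
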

\noindent
Note: all the objects that we have defined so far (independent sets, dependent sets, bases, circuits, rank function) can be used to define a matroid, i.e., we can define a matroid as a ground set $E$ with a collection of circuits or a ground set $E$ with a rank function, etc. Moreover, each of these objects can be characterized by a set of axioms, as for example in the following lemma.

\begin{lemma}
Let $E$ be a finite set and $r: \Po (E) \rightarrow \mZ_+$. We have that $r$ is a rank function of a matroid on $E$ if and only if $r$ satisfies (R1), (R2) and (R3). 
\end{lemma}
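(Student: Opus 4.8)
The plan is to prove the two directions separately. The forward direction is immediate: if $r$ is the rank function of a matroid, then it satisfies (R1)--(R3) by the preceding lemma, so nothing new is required. All the content lies in the converse, where I must manufacture a matroid out of an abstract function $r$ satisfying the three axioms and then check that its rank function is exactly $r$.

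For the converse, the natural candidate is to declare a set independent precisely when its rank is full: set $\I := \{X \subseteq E : r(X) = |X|\}$, and verify that $(E,\I)$ satisfies (I1)--(I3). Axiom (I1) follows because (R1) together with nonnegativity of $r$ forces $r(\emptyset)=0$, so $\emptyset \in \I$. Axiom (I2) follows by writing $I = I' \cup (I \setminus I')$ and applying (R3) to these two disjoint sets, so that the intersection term is $r(\emptyset)=0$; this yields the subadditivity $r(I) \le r(I') + r(I\setminus I')$, which combined with (R1) on each summand pins down $r(I') = |I'|$ whenever $I$ is independent. The real work is (I3).

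For (I3), suppose $I_1, I_2 \in \I$ with $|I_1| < |I_2|$ and, toward a contradiction, that $r(I_1 \cup \{e\}) = r(I_1)$ for every $e \in I_2 \setminus I_1$ (the only alternative to $I_1 \cup \{e\}$ being independent). The key lemma, which I expect to be the crux of the whole argument, is that adjoining rank-preserving elements one at a time keeps the rank fixed: if $r(I_1 \cup A) = r(I_1)$ and $r(I_1 \cup \{e\}) = r(I_1)$ with $e \notin A$, then applying (R3) to $X_1 = I_1 \cup A$ and $X_2 = I_1 \cup \{e\}$, whose intersection is exactly $I_1$, gives $r(I_1 \cup A \cup \{e\}) \le r(I_1)$, while (R2) gives the reverse inequality. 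Iterating this over the elements of $I_2 \setminus I_1$ yields $r(I_1 \cup I_2) = r(I_1) = |I_1|$, whence $|I_2| = r(I_2) \le r(I_1 \cup I_2) = |I_1|$ by (R2), contradicting $|I_1| < |I_2|$. Hence (I3) holds.

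Finally I must confirm that the rank function $r_M$ of the matroid $(E,\I)$ coincides with $r$. One inequality is easy: any independent $I \subseteq S$ satisfies $|I| = r(I) \le r(S)$ by (R2), so $r_M(S) \le r(S)$. For the reverse, take a maximal independent $I \subseteq S$, so $|I| = r_M(S)$; maximality forces $r(I \cup \{e\}) = r(I)$ for every $e \in S \setminus I$, and the same rank-preservation lemma iterated over $S \setminus I$ gives $r(S) = r(I) = |I| = r_M(S)$. This closes the argument. The only genuinely delicate ingredient throughout is the repeated use of submodularity (R3) to propagate a purely local rank equality, established one element at a time, to the entire set.
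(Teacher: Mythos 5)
Your proof is correct and complete: defining $\I=\{X\subseteq E: r(X)=|X|\}$, using (R3) applied to sets whose intersection is exactly $I_1$ to propagate the single-element rank equalities, and then verifying $r_M=r$ via maximal independent subsets is exactly the standard argument, and every step (including the implicit use of integrality of $r$ to get the dichotomy $r(I_1\cup\{e\})\in\{|I_1|,|I_1|+1\}$) checks out. Note that the paper itself states this lemma without proof, deferring to Oxley's book, so there is no internal proof to compare against; your write-up supplies the missing classical argument.
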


\begin{definition}
A vector matroid over a field $F$ is a matroid whose ground set is given by the column index set of a matrix $A$ defined over $F$, and whose independent sets are given by the column index subsets indicating linearly independent columns. We denote such a matroid by $M=M[A]$. We call $A$ a representative matrix of the matroid. 
\end{definition}

For a vector matroid, the objects defined previously (dependent sets, bases, rank function) naturally match with the objects defined by the corresponding linear algebraic definition. The matroid theory is also connected to other fields such as graph theory. For an undirected graph, the set of edges define a ground set and a collection of edges that does not contain a cycle defines an independent set. 
A major problem in matroid theory, consist in identifying whether a given matroid belongs to a certain class of structured matroids, such as vector matroids or graphic matroids. We are particularly interested here in the problem of determining whether a given matroid can be expressed as a vector matroid over a finite field.
 
\begin{definition}
A matroid is representable over a field $F$ if it is isomorphic to a vector matroid over the field $F$.
A $\F_2$ representable matroid is called a binary matroid. 
\end{definition}

Note that there are several equivalent representation matrices of a given representable matroid over a field. 
It easy to show that on a rank $R$ matroid which is representable on $F$, one can always pick a representative matrix of the form $[I_R | A]$, where $A$ is an $R \times (n-R)$ matrix. This is called a standard representative matrix. 

We review here some basic construction defined on matroids. 
\begin{definition}
Let $M$ be a matroid with ground set $E$ and independent sets $\I$. Let $S \subseteq E$.
\begin{itemize}
\item  The restriction of $M$ to $S$, denoted by $M|S$, is the matroid whose ground set is $S$ and whose independent sets are the independent sets in $\I$ which are contained in (or equal to) $S$.
\item The contraction of $M$ by $S$, denoted by $M / S$, is the matroid whose ground set is $E-S$ and whose independent sets are the subsets $I$ of $E-S$ for which there exists a basis $B$ of $M|S$ such that $I \cup B \in \I$.
We will see an equivalent definition of the contraction operation when defining the dual of matroid. 
\item A matroid $N$ that is obtained from $M$ by a sequence of restrictions and contractions is called a minor of $M$. 
\end{itemize}
\end{definition}

We now define a matroid which is particular with respect to the representability theory of binary matroids.
\begin{definition}
Let $m,n \in \mZ_+$ with $m \leq n$. Let $E$ be a finite set with $n$ elements and $\B$ the collection of $m$-element subsets of $E$. One can easily check that this collection determines the bases of a matroid on $E$. We denote this matroid by $U_{m,n}$ and call it the uniform matroid of rank $m$ on a set with $n$ elements.
\end{definition}

The following are two major theorems concerning the representation theory of binary matroids. 
\begin{thm}\label{tutte}[Tutte]
A matroid is binary if and only if it has no minor that is $U_{2,4}$.
\end{thm}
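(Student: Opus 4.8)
\noindent The plan is to prove the two implications separately; the forward direction is routine and the converse carries the weight of the theorem.

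For the forward direction (binary implies no $U_{2,4}$ minor), I would first check that $U_{2,4}$ is itself not binary. A rank-$2$ binary representation would require four columns of $\F_2^2$ that are pairwise linearly independent; but $\F_2^2$ has only the three nonzero vectors $(1,0),(0,1),(1,1)$, and any fourth column repeats one of these or is zero, creating a parallel pair or a loop. Hence no such representation exists. I would then verify that the class of binary matroids is minor-closed: deleting an element just removes a column of a representing matrix over $\F_2$, while contracting an element amounts to pivoting on its column and then deleting, which again leaves a matrix over $\F_2$ (this is the standard description of $M/e$ for a representable matroid). Combining the two facts, a binary matroid cannot have the non-binary $U_{2,4}$ as a minor.

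For the converse I would route through an intermediate, purely combinatorial characterization: a matroid $M$ is binary if and only if $|C \cap C^*|$ is even for every circuit $C$ and every cocircuit $C^*$ (a cocircuit being a circuit of the dual, introduced later in the text). The argument then has two halves. \emph{(a)} Absence of a $U_{2,4}$ minor forces this parity condition. \emph{(b)} The parity condition produces an explicit $\F_2$-representation. For \emph{(b)} I would work inside $\F_2^E$, let $W$ be the subspace spanned by the incidence vectors of the cocircuits, and observe that the parity condition says precisely that every circuit incidence vector is orthogonal to $W$, so the span of the circuit vectors lies in $W^\perp$. A dimension count matching $\dim W$ with $R=r(E)$, together with the identification of the minimal supports in $W^\perp$ with the circuits of $M$, would then show that a matrix whose rows form a basis of $W$ represents $M$, i.e. its vector matroid has exactly the rank function of $M$.

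The hard part will be \emph{(a)}: turning a failure of parity into a concrete $U_{2,4}$ minor. Here I would start from the orthogonality fact that a circuit and a cocircuit never meet in exactly one element, so an odd intersection has size at least $3$, and then try to reduce: restricting to $C \cup C^*$ and contracting or deleting the remaining elements should collapse the configuration to a rank-$2$ matroid on four elements in which every pair is independent, which is exactly $U_{2,4}$. Making this reduction precise, and in particular controlling an induction so that the odd-intersection property is preserved by the minors taken along the way, is the technical core of Tutte's theorem and is where I would expect the real difficulty to lie.
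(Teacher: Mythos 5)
Your forward direction (binary implies no $U_{2,4}$ minor) is correct and complete: $U_{2,4}$ is not binary because $\F_2^2$ has only three nonzero vectors, and the class of $\F_2$-representable matroids is closed under deletion (remove a column) and contraction (pivot, then delete). That part is fine. Note, for calibration, that the paper itself does not prove this theorem at all: it is quoted as a classical result (with Oxley's book as the standing reference) and used as a black box in the proof of Theorem \ref{iso}, so there is no proof in the paper to compare against; your attempt must stand on its own.

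It does not. The converse, which you correctly identify as carrying all the weight, is a plan rather than a proof, and the gap sits exactly where the theorem's content is. Step \emph{(a)} --- showing that the absence of a $U_{2,4}$ minor forces $|C \cap C^*|$ to be even for every circuit $C$ and cocircuit $C^*$ --- is left as a description of what one ``would try'': restrict to $C \cup C^*$, contract and delete the rest, and hope to land on $U_{2,4}$. You yourself flag that making this precise, and preserving the odd-intersection property through the minors taken along the way, ``is the technical core''; an acknowledged hole at the core is still a hole, and this reduction is genuinely delicate (the standard treatments run it through a minimal-counterexample argument with connectivity control, not a one-shot restriction to $C \cup C^*$). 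Step \emph{(b)} is also only sketched: the claim that a matrix whose rows span the cocircuit space represents $M$ requires proving both that $\dim W = r(E)$ and that the minimal supports of $W^\perp$ are exactly the circuits of $M$ --- the second of these is essentially equivalent to the circuit-space characterization the paper records in Remark \ref{rem} for matroids already known to be binary, so assuming it here without proof risks circularity. As written, the proposal establishes the easy implication and reduces the hard one to two unproven lemmas, one of which is the theorem's substance.
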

\begin{thm}\label{whitney}[Whitney]
A matroid is binary if and only if the symmetric sum ($\triangle$) of any two circuits is the union of disjoint circuits.
\end{thm}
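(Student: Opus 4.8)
The plan is to treat the two directions by quite different means: the ``only if'' direction is a short computation in the $\F_2$ null space, while the ``if'' direction requires reconstructing a binary representation (or, alternatively, invoking Tutte's theorem) from the symmetric-difference hypothesis.

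For the forward direction, suppose $M=M[A]$ with $A$ a matrix over $\F_2$ whose columns are indexed by $E$. First I would record that over $\F_2$ a circuit $C$ supports a \emph{unique} dependence, namely its incidence vector $\chi_C\in\ker A$: any dependence supported in $C$ is supported on a dependent subset of $C$, which by minimality of $C$ must be all of $C$, and the only nonzero scalar available is $1$. The key lemma is then that every nonzero $v\in\ker A$ has $\supp(v)$ equal to a disjoint union of circuits; this follows by induction on $|\supp(v)|$, peeling off a circuit $C\subseteq\supp(v)$ and replacing $v$ by $v+\chi_C$, whose support is $\supp(v)\setminus C$. Given two circuits $C_1,C_2$, the vector $\chi_{C_1}+\chi_{C_2}$ lies in $\ker A$ and, because addition over $\F_2$ is coordinatewise XOR, has support exactly $C_1\,\triangle\,C_2$; the lemma then shows that $C_1\,\triangle\,C_2$ is a disjoint union of circuits.

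For the converse I would build the representation directly. Let $\Z\subseteq\F_2^{E}$ be the set of incidence vectors of the ``cycles'' of $M$, i.e.\ of the sets that are disjoint unions of circuits (together with $\emptyset$). The first substantial step is to bootstrap the hypothesis into the statement that $\Z$ is a linear subspace of $\F_2^{E}$, i.e.\ is closed under all symmetric differences. Once $\Z$ is a subspace, I would take a matrix $A$ over $\F_2$ whose rows span the orthogonal complement $\Z^{\perp}$, so that $\ker A=(\Z^{\perp})^{\perp}=\Z$. It then remains to identify the circuits of $M[A]$, which are the inclusion-minimal nonzero elements of $\ker A=\Z$, with the circuits of $M$: every nonempty cycle contains a circuit, so the minimal nonempty members of $\Z$ are precisely the circuits of $M$, and since a matroid is determined by its circuits this gives $M[A]=M$, proving $M$ binary.

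The main obstacle is the bootstrapping step, i.e.\ upgrading the pairwise hypothesis on $C_1\,\triangle\,C_2$ to closure of $\Z$ under symmetric differences of arbitrarily many circuits. The natural reduction is to show that $Z\,\triangle\,C\in\Z$ for every cycle $Z$ and circuit $C$, but a naive induction on the number of circuits comprising $Z$ is not monotone, since re-expressing $Z\,\triangle\,C$ as a disjoint union can increase that count; this has to be handled by a minimal-counterexample argument (or by an appeal to the circuit-elimination axiom). As an alternative route that sidesteps the explicit construction, one may instead invoke Tutte's theorem (Theorem~\ref{tutte}): one checks that $U_{2,4}$ violates the hypothesis, since two of its three-element circuits have a two-element symmetric difference that is independent, and then shows the hypothesis is inherited by minors; the delicate case there is contraction, which again amounts to understanding how $\Z$ projects onto $E\setminus T$.
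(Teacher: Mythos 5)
The paper itself offers no proof of this statement: it is quoted as Whitney's classical characterization, with \cite{oxley} as the background reference, so your proposal has to stand entirely on its own. Your forward direction does stand on its own and is complete: the observation that a circuit $C$ carries a unique dependence $\chi_C\in\ker A$ (minimality forces the support of any dependence inside $C$ to be all of $C$, and over $\F_2$ the only nonzero coefficient is $1$), the peeling induction showing that every nonzero $v\in\ker A$ has $\supp(v)$ a disjoint union of circuits, and the application to $\chi_{C_1}+\chi_{C_2}$ together prove the ``only if'' direction (with the trivial remark that $C_1=C_2$ gives the empty union).

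The converse, however, contains a genuine gap, and you have located it yourself: the whole construction rests on the set $\Z$ of cycles being closed under symmetric difference, and you say this ``has to be handled by a minimal-counterexample argument (or by an appeal to the circuit-elimination axiom)'' without carrying out either. Naming the method is not the same as executing it, and this step is precisely where the pairwise hypothesis must do its work; everything downstream (choosing $A$ with $\ker A=\Z$ via $(\Z^{\perp})^{\perp}=\Z$, and identifying the minimal nonempty members of $\Z$ with the circuits of $M$) is routine. The gap is fillable along the lines you hint at, and here is a concrete way: prove by strong induction on $|Z_1\cup Z_2|$ that the symmetric difference of any two cycles $Z_1,Z_2$ is a cycle. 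If the circuits in the decomposition of $Z_1$ are all disjoint from those of $Z_2$, the claim is immediate. Otherwise pick circuits $D\subseteq Z_1$ and $C\subseteq Z_2$ from the two decompositions with $D\cap C\neq\emptyset$, set $Z_1'=Z_1\setminus D$ and $Z_2'=Z_2\setminus C$, and write $Z_1\triangle Z_2=(Z_1'\triangle Z_2')\triangle(D\triangle C)$. Both the pair $(Z_1',Z_2')$ and the pair $(Z_1'\triangle Z_2',\,D\triangle C)$ have unions contained in $(Z_1\cup Z_2)\setminus(D\cap C)$, which is strictly smaller; so the induction hypothesis applied twice, together with the pairwise hypothesis applied to $D\triangle C$, finishes the argument. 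This induction measure is exactly what repairs the non-monotonicity you correctly identified in the naive circuit-count induction. Your alternative route through Theorem \ref{tutte} suffers from the same defect in aggravated form: the $U_{2,4}$ check is fine, but minor-closure of the hypothesis under contraction is exactly the part labeled ``delicate'' and left undone, and invoking Tutte's excluded-minor theorem to prove Whitney's far more elementary characterization is a poor trade in any case.
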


\begin{remark}\label{rem}
In a binary matroid, the circuit space of a matroid is equal to the kernel of its representative matrix. Indeed, if we multiply a circuit $C$ by the representative matrix $A$, we are summing the columns corresponding to a circuit. But this sum must be 0, since a circuit is a minimal dependent set, and therefore, we can express one of the columns as the sum of the others. 
\end{remark}

Next, we introduce the duality subject, 
which will play a central role in the applications of our next section.  

\begin{thm}
Let $M$ be a matroid on $E$ with a set of bases $\B$. Let $\B^* = \{E - B: B \in \B\}$. Then $\B^*$ is the set of bases of a matroid on $E$. 
We denote this matroid by $M^*$ and call it the dual of $M$.
\end{thm}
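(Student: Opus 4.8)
The plan is to produce the dual matroid through its rank function rather than wrestling directly with the bases, so that the already-established characterization (the Lemma: $r$ is a matroid rank function iff it satisfies (R1)--(R3)) does the heavy lifting. Write $R = r(E)$ and $n = |E|$, and define a candidate function $r^* : \Po(E) \to \mZ_+$ by
$$ r^*(X) = |X| - R + r(E - X). $$
First I would check this is well defined into $\mZ_+$: nonnegativity follows from submodularity of $r$ applied to $X$ and $E-X$, which gives $r(E) \le r(X) + r(E - X) \le |X| + r(E-X)$, hence $r^*(X) \ge 0$.

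The core of the argument is to verify (R1), (R2), (R3) for $r^*$. (R1) is immediate from monotonicity of $r$, since $r(E - X) \le r(E) = R$ forces $r^*(X) \le |X|$. For (R2) I would argue incrementally: adjoining a single element $e$ to $X$ changes $r^*$ by $1 - \bigl(r(E-X) - r((E-X) - e)\bigr)$, and the bracketed term lies in $\{0,1\}$ because adjoining one element raises the rank by at most one (a consequence of (R1)--(R3)); hence $r^*$ is nondecreasing. The essential step is (R3): here I would set $A = E - X_1$ and $B = E - X_2$, use De Morgan to rewrite $E - (X_1 \cup X_2) = A \cap B$ and $E - (X_1 \cap X_2) = A \cup B$, and then invoke submodularity (R3) of $r$ on $A, B$. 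Combined with $|X_1 \cup X_2| + |X_1 \cap X_2| = |X_1| + |X_2|$, this yields submodularity of $r^*$ with no further work. By the Lemma, $r^*$ is the rank function of a matroid on $E$, which I would name $M^*$.

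Finally I would identify the bases of $M^*$ with $\B^*$. A set $X$ is independent in $M^*$ precisely when $r^*(X) = |X|$, i.e. $r(E - X) = R$, meaning $E - X$ contains a basis of $M$. Since $r^*(E) = n - R$, the bases of $M^*$ are the independent sets of size $n - R$; such an $X$ has $|E - X| = R$ together with $r(E - X) = R$, so $E - X$ is an independent set of size $R$, i.e. a basis of $M$. Thus $X$ is a basis of $M^*$ iff $X = E - B$ for some $B \in \B$, which is exactly $\B^*$.

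The step I expect to be most delicate is not (R3) --- that is a clean complementation plus one application of the submodularity of $r$ --- but rather the bookkeeping around well-definedness and monotonicity, where one must repeatedly fall back on (R1)--(R3) to control the behaviour of $r$ on complements. As an alternative route that stays closer to the statement, one could instead show directly that $\B^*$ satisfies a basis exchange axiom, reducing via complementation to the claim that for bases $B_1, B_2$ of $M$ and $x \in B_2 - B_1$ there is $y \in B_1 - B_2$ with $(B_1 - y) \cup \{x\} \in \B$; this follows by taking the fundamental circuit $C$ of $x$ inside $B_1 \cup \{x\}$ and noting that $C$ cannot lie within the independent set $B_2$, so some $y \in C - \{x\}$ escapes $B_2$. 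I would nonetheless keep the rank-function proof as the main line, since it invokes the Lemma directly and avoids needing a separately justified basis-axiom characterization.
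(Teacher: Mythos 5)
Your proof is correct, but there is nothing in the paper to compare it against: the paper states this theorem as classical background (deferring to Oxley) and gives no proof at all, and the same is true of the lemma immediately following it. In fact, the function you introduce, $r^*(X) = |X| - r(E) + r(E-X)$, is precisely the formula asserted in that subsequent lemma, so your argument has the pleasant feature of establishing both unproved statements simultaneously: you verify (R1)--(R3) for $r^*$ and then read off that its bases are exactly $\B^* = \{E - B : B \in \B\}$. The individual steps all check out --- nonnegativity from submodularity of $r$ together with $r(\emptyset)=0$, monotonicity from the unit-increase property (which, as you note, follows from submodularity and (R1) applied to a singleton), submodularity of $r^*$ by De Morgan complementation, and the basis identification from $r^*(X)=|X| \Leftrightarrow r(E-X)=r(E)$ combined with the size count $r^*(E) = |E| - r(E)$. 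This route is also the right one relative to the paper's toolkit, since it leans only on the paper's stated axiomatic characterization of rank functions. One caveat on your alternative sketch: to conclude that $(B_1 - y)\cup\{x\}$ is a basis you need more than the existence of $y \in C - B_2$; you must rule out a second circuit in $(B_1 \cup \{x\}) - y$, which requires the uniqueness of the fundamental circuit (or an inline circuit-elimination argument), and it also presupposes a basis-exchange axiomatization that the paper never states --- so your choice of the rank-function argument as the main line is the sound one.
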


\begin{lemma}
If $r$ is the rank function of $M$, then the rank function of $M^*$ is given by
$$r^* (S) = r(S^c) + |S| - r(E).$$
\end{lemma}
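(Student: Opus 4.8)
The plan is to reduce everything to a single reusable identity: in \emph{any} matroid $N$ on ground set $E$ with bases $\B_N$ and rank function $r_N$, the rank of a set is the largest overlap it has with a basis, i.e.
\begin{equation*}
r_N(S) = \max_{B \in \B_N} |S \cap B|.
\end{equation*}
Once this is available, I would apply it directly to $M^*$, whose bases are the complements $E - B$ of the bases $B$ of $M$, and then massage the resulting expression into the claimed formula. The whole lemma is essentially this identity plus bookkeeping with cardinalities and complements.

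To prove the identity, I would argue by two inequalities. For ``$\geq$'': take a maximal independent set $I \subseteq S$ of $N$, so $|I| = r_N(S)$; since $I$ is independent it extends to a basis $B \in \B_N$ with $I \subseteq B$, and as $I \subseteq S$ we get $I \subseteq S \cap B$, whence $|S \cap B| \geq r_N(S)$. For ``$\leq$'': for any basis $B$, the set $S \cap B$ is a subset of the independent set $B$, hence independent by (I2), and it is contained in $S$; therefore $|S \cap B| \leq r_N(S)$ by definition of the rank. Combining the two gives the identity. The only point needing care is the extension-to-a-basis step, which follows from the augmentation axiom (I3) applied repeatedly.

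Applying the identity to $M^*$ and using $\B^* = \{E - B : B \in \B\}$ gives
\begin{equation*}
r^*(S) = \max_{B \in \B} |S \cap (E - B)| = \max_{B \in \B}\bigl(|S| - |S \cap B|\bigr) = |S| - \min_{B \in \B} |S \cap B|.
\end{equation*}
To finish, I would rewrite $|S \cap B| = |B| - |B \cap S^c| = r(E) - |B \cap S^c|$, using that every basis of $M$ has cardinality $r(E)$. Taking the minimum over $B$ converts the minimum of $r(E) - |B \cap S^c|$ into $r(E)$ minus the maximum of $|B \cap S^c|$, and by the same identity (now applied inside $M$ itself) $\max_{B \in \B} |B \cap S^c| = r(S^c)$. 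Substituting yields $r^*(S) = |S| - (r(E) - r(S^c)) = r(S^c) + |S| - r(E)$, as claimed.

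The main obstacle is not any single computation but getting the max/min conversion right: one must track that complementing inside the overlap turns a maximum over bases into a minimum, and that pulling out the constant $|B| = r(E)$ flips the optimization so that the residual maximum over $B$ is exactly $r(S^c)$. Beyond that, the argument is purely the rank-as-maximal-overlap identity used twice, once for $M^*$ and once for $M$.
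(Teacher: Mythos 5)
Your proof is correct. Note that the paper itself states this lemma without proof (it is a classical fact of matroid duality, deferred to Oxley's book), so there is no in-paper argument to compare against; your write-up supplies a complete, self-contained justification. The route you take --- establishing the identity $r_N(S) = \max_{B \in \B_N} |S \cap B|$ via the two inequalities (extension of a maximal independent subset of $S$ to a basis for one direction, independence of $S \cap B$ by (I2) for the other), then applying it once to $M^*$ and once to $M$ with the cardinality bookkeeping $|S \cap (E-B)| = |S| - |S\cap B|$ and $|S \cap B| = r(E) - |B \cap S^c|$ --- is essentially the standard textbook proof, and every step, including the max/min flips, checks out.
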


We can then define the contraction operation via duality. 
\begin{definition}
The contraction of $M$ by $S$ is given by the dual of the restriction of $M^*$ on $S$, i.e., $M / S = (M^*|S)^*$.
\end{definition}

We conclude this section with the definition of polymatroids. 
\begin{definition}
A polymatroid is an ordered pair of a finite set $E$ called the ground set and a $\beta$-rank function $\rho: \Po(E) \rightarrow \mR_+$ which satisfies
\begin{align*}
 (R1) \quad &\text{$f(\emptyset) =0$} .\\
(R2) \quad &\text{If $X_1 \subseteq X_2 \subseteq E$, then $ f(X_1) \leq f(X_2)$}. \\
 (R3) \quad & \text{If $X_1 , X_2 \subseteq E$, then } \\
  & f(X_1 \cup X_2) + f(X_1 \cap X_2 ) \leq f(X_1) + f(X_2).
\end{align*}
 The region of $\mR^m$ defined by $\{(R_1,\dots,R_m): R_S \leq f(S), S \subseteq E\}$ is called a polyhedron. 
\end{definition}

We refer to \cite{oxley} for more details on matroid theory. 

\section{Extremal Dependencies}
This section connects the two previous ones, by characterizing MACs having an integer valued UMIF, i.e., a matroidal UMIF. Note that there exists a wide class of problems connecting information and matroid theory, such as characterizing entropic matroids; we refer to \cite{fuji, han, lovasz, matus, hanly, yeung} and references therein, and we show in Section \ref{entmat} that entropic matroids are particular cases of MAC matroids. 
An application of the results presented here is given in \cite{mmac}, for a MAC polar code construction.

Recall that $E_m=\{1,\ldots,m\}$. 

\begin{thm}[\cite{fuji}]
For any $m\geq 1$, any MAC $W \in M(\Y|\X^m)$ and any $P_1,\ldots, P_m \in M(\X)$ 
the function $\rho=I(P_1,\ldots,P_m,W)$ defined in \eqref{Ifct} is a $\beta$-rank function on $E_m$ and $(E_m,\rho)$ is a polymatroid.  

We denote this polymatroid by $M[P_1,\ldots,P_m,W]$. 
We use $M[W]$ when $P_1=\ldots=P_m=U_\X$.
If for a polymatroid $M$ we have $M\cong M[W]$ (where $\cong$ means isomorphic), we say that $W$ is a representative channel of $M$. 
\end{thm}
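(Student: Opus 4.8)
The plan is to verify the three polymatroid axioms (R1)--(R3) for $\rho(S) = I(X[S]; Y, X[S^c])$ directly, reducing each to a standard information inequality. The one structural feature I would exploit throughout is that $X[E_m] = (X_1,\dots,X_m)$ has \emph{independent} coordinates, since its law is the product $P_1 \times \cdots \times P_m$. First I would simplify $\rho$: by the chain rule, $I(X[S]; Y, X[S^c]) = I(X[S]; X[S^c]) + I(X[S]; Y \mid X[S^c])$, and independence of the inputs kills the first term, so $\rho(S) = I(X[S]; Y \mid X[S^c])$. Writing this as a difference of conditional entropies and using $X[S] \cup X[S^c] = X[E_m]$ gives the convenient form $\rho(S) = H(Y \mid X[S^c]) - H(Y \mid X[E_m])$. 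Everything then becomes a statement about $g(T) := H(Y \mid X[T])$.

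Axiom (R1) is then immediate, since $\rho(\emptyset) = H(Y \mid X[E_m]) - H(Y \mid X[E_m]) = 0$. For (R2), given $S_1 \subseteq S_2$ I would note $S_2^c \subseteq S_1^c$, so $X[S_2^c]$ is a subvector of $X[S_1^c]$; because conditioning on more variables cannot increase entropy, $g(S_1^c) \leq g(S_2^c)$, and subtracting the common constant $H(Y \mid X[E_m])$ yields $\rho(S_1) \leq \rho(S_2)$. (Monotonicity together with (R1) also delivers $\rho \geq 0$, so $\rho$ indeed maps into $\mR_+$.)

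The substance is in (R3). Translating through complementation with $T_i := S_i^c$, and using $(S_1 \cup S_2)^c = T_1 \cap T_2$ and $(S_1 \cap S_2)^c = T_1 \cup T_2$, the desired submodularity of $\rho$ is \emph{exactly} the submodularity of $g$:
$$g(T_1 \cap T_2) + g(T_1 \cup T_2) \leq g(T_1) + g(T_2).$$
To establish this I would write $g(T) = H(X[T], Y) - H(X[T])$. The joint-entropy map $T \mapsto H(X[T], Y)$ is submodular --- this is the standard submodularity of Shannon entropy (equivalently, non-negativity of conditional mutual information) applied to the augmented family $\{X_i\}_{i \in E_m} \cup \{Y\}$. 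Meanwhile, independence of the inputs makes $H(X[T]) = \sum_{i \in T} H(X_i)$ \emph{modular}. Since a submodular function minus a modular function is submodular, $g$ is submodular and (R3) follows.

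The hard part is really (R3), and the crucial observation is that it genuinely needs the product structure of the input law: for dependent inputs $H(X[T])$ is only submodular rather than modular, and $\rho$ need not be submodular at all. Independence is precisely what turns $H(X[T])$ into a modular correction and rescues submodularity. Once (R1)--(R3) are checked, $(E_m, \rho)$ is a polymatroid by definition, and the naming of $M[P_1,\dots,P_m,W]$ (and $M[W]$ in the uniform case) is purely notational.
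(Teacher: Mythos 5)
Your proof is correct. The paper itself contains no proof of this theorem --- it is stated with an attribution to Fujishige \cite{fuji} --- so there is nothing internal to compare against; your argument stands on its own. The reduction $\rho(S) = I(X[S]; Y \mid X[S^c]) = H(Y \mid X[S^c]) - H(Y \mid X[E_m])$ (valid precisely because the inputs are independent), the verification of (R1) and (R2) via ``conditioning cannot increase entropy,'' and the proof of (R3) by writing $g(T) = H(X[T],Y) - H(X[T])$ as a submodular function minus a modular one are all sound; this is essentially the classical argument for the polymatroid structure of the multiple access region.

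One correction, though, to your closing commentary: the claim that for dependent inputs ``$\rho$ need not be submodular at all,'' and that independence is precisely what rescues (R3), is false. Write $I(X[S]; Y X[S^c]) = H(X[S]) + H(Y, X[S^c]) - H(X[E_m], Y)$. The first term is submodular in $S$ (submodularity of entropy); the second is also submodular in $S$, being the composition of the submodular map $T \mapsto H(Y, X[T])$ with complementation $S \mapsto S^c$, which preserves submodularity; the third is a constant. Hence $\rho$ is submodular for an \emph{arbitrary} joint input distribution. What genuinely fails without independence is monotonicity (R2): take $m=2$, $X_1 = X_2$ a uniform bit, and $Y$ constant; then $\rho(\{1\}) = I(X_1;X_2) = 1 > 0 = \rho(\{1,2\})$. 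So independence is essential for (R2), not (R3); it enters your (R3) argument only because you routed submodularity through the conditional form $g(S^c) - g(E_m)$, which is itself an independence-dependent identity.
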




In this section, we are interested in characterizing the MACs for which the function $\rho$ is integer valued, i.e., for which $(E_m, \rho)$ defines a matroid. 
We restrict ourselves to binary MACs and we only consider the case where $P_1,\ldots,P_m$ are all given by the uniform distribution. 
One can easily come up with examples of binary MACs that would provide an integral $\rho$. 
But we are mostly interested in the reverse problem, i.e., in characterizing the matroids that admit such a mutual information representation. 
From a communication point of view, such MACs are interesting because they are trivial to communicate over with respect to both noise and interference management, and they indeed correspond to the extremal MACs created in the polarization process of \cite{mmac}.
 
\begin{definition}
A matroid $M$ is a BUMAC matroid if $M \cong M[W]$ for a binary MAC $W$.  
Hence, a BUMAC matroid is a matroid whose rank function is given by the UMIF (Definition \ref{mif}) of a binary MAC. ``BUMAC" refers to binary uniform MAC. 
\end{definition}

\begin{thm}\label{iso}
A matroid is BUMAC if and only if it is binary. 
\end{thm}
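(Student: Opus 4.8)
The plan is to prove both implications, taking all logarithms to base $2$ so that a single binary input contributes exactly one unit of information. First I would check that the statement is well posed as a claim about matroids: for any binary MAC with uniform inputs, independence of the inputs gives $\rho(S)=I(X[S];Y,X[S^c])=I(X[S];Y\mid X[S^c])$, so $\rho(S)\le H(X[S])=|S|$ and each singleton satisfies $\rho(\{i\})\in\{0,1\}$. Since $\rho$ is submodular and monotone (established above), every increment obeys $0\le \rho(S\cup\{i\})-\rho(S)\le \rho(\{i\})\le 1$; hence an integer-valued $\rho$ is genuinely a matroid rank function, so $M[W]$ is a matroid and $H(X[S]\mid Y,X[S^c])=|S|-\rho(S)$ is the nullity.

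For the direction \emph{binary $\Rightarrow$ BUMAC}, I would start from a binary matroid $M$ on $E_m$ of rank $R$, fix an $R\times m$ representative matrix $G$ over $\mF_2$ with columns $g_i$, and take the linear deterministic binary MAC with output alphabet $\mF_2^{R}$ given by $Y=\sum_{i=1}^m X_i g_i$, i.e. $W(y\mid x)=\mathbf 1\{y=\sum_i x_i g_i\}$. Feeding uniform independent bits, $(Y,X[S^c])$ is informationally equivalent to $(\sum_{i\in S}X_i g_i,\,X[S^c])$ because $\sum_{i\in S^c}X_i g_i$ is a function of $X[S^c]$, so $\rho(S)=H\!\left(\sum_{i\in S}X_i g_i\right)$. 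This sum is uniform on the image of $x\mapsto\sum_{i\in S}x_i g_i$, a set of size $2^{\dim\mathrm{span}\{g_i:i\in S\}}=2^{r(S)}$, whence $\rho(S)=r(S)$ for all $S$ and $M[W]\cong M$.

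For the converse \emph{BUMAC $\Rightarrow$ binary} I would reduce everything to a single structural lemma: if a binary MAC with uniform inputs has integer-valued UMIF $\rho$, then there is an $\mF_2$-subspace $V\le\mF_2^m$ such that for almost every output $y$ the posterior $P(X=\cdot\mid Y=y)$ is uniform on a coset of $V$. Granting this, the channel is informationally equivalent to the quotient map $X\mapsto X+V$, hence to $GX$ for any $G$ with $\ker G=V$; and for such a $G$ one computes, exactly as in the easy direction, $H(X[S]\mid Y,X[S^c])=\dim(V\cap\mF_2^{S})$, the dimension of the vectors of $V$ supported inside $S$. Therefore $\rho(S)=|S|-\dim(V\cap\mF_2^{S})=r_{M[G]}(S)$, so $M\cong M[G]$ is a binary vector matroid, with $V=\ker G$ playing the role of the cycle space of Remark \ref{rem}.

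The main obstacle is the structural lemma, which is where integrality does all the work. I would set $V=\{v\in\mF_2^m:\,W(\cdot\mid x+v)=W(\cdot\mid x)\ \forall x\}$, the shift-symmetry subspace of the channel; then each $\nu_y:=P(X=\cdot\mid Y=y)$ is constant on cosets of $V$, so it is automatically uniform \emph{within} each coset and $H(X\mid Y)=\dim V+H(X+V\mid Y)$. What must be forced is $H(X+V\mid Y)=0$, i.e. that $Y$ determines the coset $X+V$ and the reduced channel $X+V\mapsto Y$ is noiseless; this is the multivariate version of the fact that a $\mathrm{BSC}(p)$ has integer UMIF only for $p\in\{0,\tfrac12,1\}$. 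I would attempt this by induction on $m$, peeling one coordinate via $H(X_i\mid Y,X_{-i})=1-\rho(\{i\})\in\{0,1\}$: the value $1$ means the coordinate is an independent loop and is removed harmlessly, while the value $0$ means it is recoverable from $(Y,X_{-i})$ and, by the integrality of all remaining subset ranks, must be an $\mF_2$-\emph{affine} function of the others, so that it folds into the linear structure supplied by the inductive hypothesis. Proving that the recovery map is forced to be affine (not an arbitrary Boolean function) is the crux; the obstruction is exactly the one excluding $U_{2,4}$, where $\rho(\{i,j\})=2$ for every pair while $I(X_i;Y)=\rho(E_m)-\rho(E_m\setminus\{i\})=0$ for every $i$ would require, on each posterior support, every two coordinates to determine the other two with all coordinates marginally uniform — over $\mF_2$ this means four pairwise-independent nonzero columns in $\mF_2^2$, impossible since $\mF_2^2$ has only three nonzero vectors. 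Should the direct route stall, I would instead prove that BUMAC matroids are closed under restriction (reveal the deleted inputs at the output) and contraction (project the output modulo the span of the contracted coordinates) and exclude $U_{2,4}$ by the same $\mF_2$-counting impossibility, invoking Tutte's criterion (Theorem \ref{tutte}).
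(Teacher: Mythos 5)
Your forward direction (binary $\Rightarrow$ BUMAC) is correct and identical in substance to the paper's: feed i.i.d.\ uniform bits into the deterministic channel built from a representative matrix and check $\rho(S)=H\bigl(\sum_{i\in S}X_i g_i\bigr)=r(S)$. The converse is where the gaps are. Your primary route rests on the structural lemma that an integer-valued UMIF forces the posteriors $P(X=\cdot\mid Y=y)$ to be uniform on cosets of a single subspace $V$ with $H(X+V\mid Y)=0$; this is essentially the paper's Theorem \ref{equiv}, and you yourself flag its crux (that the map recovering $X_i$ from $(Y,X_{-i})$ must be affine rather than an arbitrary Boolean function) as unresolved. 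Note that in the paper this statement is proved only \emph{after} the matroid is already known to be binary, using circuits and Whitney's criterion (Theorem \ref{whitney}); your plan inverts that order, so it must supply the missing affineness argument on its own, and it does not.

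The fallback route is the paper's actual strategy (exclude $U_{2,4}$ as a minor, invoke Tutte's Theorem \ref{tutte}), but as written it has two holes. First, ``project the output modulo the span of the contracted coordinates'' is meaningless for a general MAC, whose output is an abstract alphabet; the correct construction is to marginalize the channel over the contracted inputs (treat $X[S^c]$ as internal noise), after which the chain rule and independence of the inputs give $I(X[T];Y,X[S\setminus T])=r(T\cup S^c)-r(S^c)$, i.e.\ exactly the contraction rank. (The paper instead handles contraction through duality: $r^*(Q)=|Q|-I(X[Q];Y)$ together with self-duality of $U_{2,4}$.) Second, and decisive: your impossibility argument for $U_{2,4}$ --- ``four pairwise-independent nonzero columns in $\F_2^2$'' --- is the counting that shows $U_{2,4}$ is not representable over $\F_2$; it applies only if the posterior support is an affine subspace of $\F_2^4$, which is precisely the linear structure you are not entitled to assume, so the argument is circular. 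What the rank conditions actually give, with no linearity, is a set $T_y\subseteq\F_2^4$ carrying the posterior on which all six projections onto coordinate pairs are bijections onto $\F_2^2$ (determinism of each pair given the complementary pair and $Y$, plus uniformity of each pair marginal). Ruling this out requires a direct combinatorial or probabilistic argument, e.g.: translate so that $0\in T_y$; each of the other three points then avoids $(0,0)$ on every pair, hence has Hamming weight at least $3$, hence projects to $(1,1)$ on at least three of the six pairs, giving at least nine occurrences of $(1,1)$ while bijectivity allows exactly six. This non-linear-algebraic step is the content of the paper's Lemma \ref{u24}, and without it both of your routes fail at the same point.
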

To prove this theorem, we first prove the following lemma.

\begin{lemma}\label{u24}
$U_{2,4}$ is not BUMAC.
\end{lemma}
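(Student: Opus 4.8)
The plan is to argue by contradiction. Suppose there is a binary MAC $W\in M(\Y|\F_2^{4})$ (necessarily with $m=4$ users, since $U_{2,4}$ has a four–element ground set and $M[W]$ lives on $E_4$) whose UMIF equals the rank function $r(S)=\min(|S|,2)$ of $U_{2,4}$. Write $X=(X_1,X_2,X_3,X_4)$ for the uniform i.i.d.\ inputs and $Y$ for the output. Because the inputs are independent, $I(X[S];X[S^c])=0$, so the hypothesis reads $I(X[S];Y\mid X[S^c])=I[S]=\min(|S|,2)$ for every $S\subseteq E_4$. For each output $y$ introduce its support fibre $A_y=\{x\in\F_2^{4}:W(y\mid x)>0\}$; the entire argument is combinatorial, about the possible shapes of the sets $A_y$.

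First I would record a minimum-distance property coming from the pairs. For any $\{i,j\}$ we have $I[\{i,j\}]=2=H(X_i,X_j)$, which forces $H(X_i,X_j\mid Y,X[\{i,j\}^c])=0$: given $Y$ and the remaining two inputs, the pair $(X_i,X_j)$ is determined. Reading this fibrewise, no two distinct inputs in a common $A_y$ may agree outside a $2$-element set; ranging over all pairs $\{i,j\}$, this says that any two distinct points of any $A_y$ are at Hamming distance at least $3$. (This family of constraints already subsumes the singleton constraints.)

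The heart of the matter, and the step I expect to be the main obstacle, is the triples. Take $S=\{1,2,3\}$: then $I[S]=2$ gives $H(X_1,X_2,X_3\mid Y,X_4)=1$. For a fixed reachable pair $(y,x_4)$, any two triples $u\neq v$ in $\F_2^{3}$ with $(u,x_4),(v,x_4)\in A_y$ differ only within $\{1,2,3\}$, hence by the previous step are at distance exactly $3$, i.e.\ complementary; since $\F_2^{3}$ has no three pairwise-complementary points, such a fibre has at most two elements and $H(X_1,X_2,X_3\mid Y=y,X_4=x_4)\le 1$. Now comes the delicate point: this conditional entropy averages to exactly $1$ while every term is $\le 1$, so every reachable term equals $1$, which forces the fibre to be exactly a complementary pair $\{(a,x_4),(\bar a,x_4)\}$ carrying equal posterior mass. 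In words, conditioned on $X_4$, each reachable output is produced by precisely two inputs differing in coordinates $\{1,2,3\}$. Running the identical argument for $S=\{1,2,4\}$ (conditioning on $X_3$) shows that, conditioned on $X_3$, each reachable output is produced by precisely two inputs differing in coordinates $\{1,2,4\}$.

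It remains to collide these two structures. Fix a reachable $y$ and an input $x\in A_y$. The $\{1,2,3\}$-statement (the partner within the $X_4$-fibre) forces $x\oplus(1,1,1,0)\in A_y$; this point has third coordinate $\bar x_3$, so applying the $\{1,2,4\}$-statement (the partner within the $X_3$-fibre) to it forces $x\oplus(1,1,1,0)\oplus(1,1,0,1)=x\oplus(0,0,1,1)\in A_y$ as well. But $x\oplus(0,0,1,1)$ differs from $x$ in exactly the two coordinates $3$ and $4$, contradicting the minimum-distance bound (distance $\ge 3$) from the second step. Hence no such $W$ exists and $U_{2,4}$ is not BUMAC. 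The two places that genuinely require care are the passage from the averaged equality $H(X_1,X_2,X_3\mid Y,X_4)=1$ to the pointwise ``complementary pair'' description (the standard but essential observation that an average of quantities each $\le 1$ can equal $1$ only if every positive-weight term is $1$), and keeping straight which coordinates get complemented when the two triple structures are composed.
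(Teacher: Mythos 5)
Your proof is correct, and it reaches the contradiction by a genuinely different route than the paper's, even though both share the same outer frame: fix an output $y$, study the support of the posterior $\pp(\cdot\,|\,y)$, and use the pair constraints $I(X[\{i,j\}];YX[\{k,l\}])=2$ to conclude that two distinct inputs in a common support fibre must lie at Hamming distance at least $3$. After that the two arguments use disjoint sets of hypotheses. The paper never touches the triples: combining $r(E_4)=2$ with the pair ranks (via the chain rule, equivalently via the dual rank function) it gets the extra condition $I(X[\{i,j\}];Y)=0$, i.e.\ the posterior marginal on every input pair is uniform, and then counts: starting from a support point $(0,0,0,0)$, the distance constraint forces $(0,1,1,1)$ to carry mass $1/4$ and excludes every point of the form $(1,0,*,*)$, while uniformity of the $(X_1,X_2)$-marginal demands those points carry total mass $1/4$ --- contradiction. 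You never use $I(X[\{i,j\}];Y)=0$; instead you invoke the triple ranks $r(\{1,2,3\})=r(\{1,2,4\})=2$, upgrade the averaged identity $H(X_1X_2X_3\,|\,YX_4)=1$ to a pointwise one (each reachable $(y,x_4)$-fibre is a complementary pair with equal masses, since each term is at most $1$ and the average equals $1$), and then compose the two pairings $x\mapsto x\oplus(1,1,1,0)$ and $x\mapsto x\oplus(1,1,0,1)$ to manufacture two support points at distance $2$. Both are sound; the paper's counting argument is a bit shorter, whereas yours buys something conceptually: it exhibits closure of the support under specific shifts, which is precisely the coset-of-$\mathrm{Ker}\,A$ structure that the paper establishes in general in the proof of Theorem \ref{equiv}, so your lemma proof previews the mechanism of the main equivalence theorem rather than being an isolated computation. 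The two delicate points you flag (the average-to-pointwise entropy step, and tracking which coordinates flip under composition) are indeed the only places requiring care, and you handle both correctly.
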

\begin{proof}
Assume that the rank function of $U_{2,4}$ is the UNIF of a MAC. We then have
\begin{align}
& I(X[i,j]; Y) = 0 , \label{c1} \\ 
& I(X[i,j]; Y X[k,l]) = 2,  \label{c2}
\end{align}
for all $i,j,k,l$ distinct in $\{1,2,3,4\}$.
Let $y_0$ be in the support of $Y$. For $x\in \F_2^4$, define $\pp(x|y_0) = W(y_0|x)  / \sum_{z \in \F_2^4}W(y_0|z)$.
Then from \eqref{c2}, $\pp(0,0,*,*|y_0)=0$ for any choice of $*,*$ which is not $0,0$ and $\pp(0,1,*,*|y_0)=0$ for any choice of $*,*$ which is not $1,1$. On the other hand, from \eqref{c1},  $\pp(0,1,1,1|y_0)$ must be equal to $p_0$. However, we have form \eqref{c2} that $\pp(1,0,*,*|y_0)=0$ for any choice of $*,*$ (even for $1,1$ since we now have $\pp(0,1,1,1|y_0)>0$). At the same time, this implies that the average of $\pp(1,0,*,*|y_0)$ over $*,*$ is zero. This brings a contradiction, since from \eqref{c1}, this average must equal to $p_0$.
\end{proof}

\begin{proof}[Proof of Theorem \ref{iso}]
We start with the converse. Let $M$ be a binary matroid on $E$ with representative matrix $A$. Let $D$ be the deterministic channel defined by the matrix $A$, then we clearly have $M \cong M[D]$. 

For the direct part, let $M$ be a BUMAC matroid. 
We already know from Lemma \ref{u24} that $M$ cannot contain $U_{2,4}$ as a minor.
If instead $U_{2,4}$ is obtained by a contraction of $S^c$ from $M$ , i.e., $M /S^c \cong U_{2,4}$, it means that $(M^* | S)^* \cong U_{2,4}$. Since $U_{2,4}$ is self dual, we have $M^* | S \cong U_{2,4}$. Let us denote by $r^*$ the rank function of $M^*$. 
We have for any $Q \subseteq E$
\begin{align*}
r^* (Q)&= |Q|+r(Q^c) - r(E), \\
&= |Q|+I(X[Q^c]; Y X[Q]) - I(X[E];Y), \\
& = |Q| - I(X[Q];Y) ,
\end{align*}
where the last equality follows from the chain rule of the mutual information. 
Since $r^*(\cdot)$ restricted to $S$ is the rank function of $U_{2,4}$, we have in particular 
\begin{align*}
& r^*(T) = 2 , \quad \forall T \subset S \text{ s.t. } |T|=2\\
& r^*(S)=2,
\end{align*}
that is, 
\begin{align}
&2- I(X[T];Y) = 2, \label{d1} \quad \forall T \subset S \text{ s.t. } |T|=2 \\
& 4-I(X[S]; Y)=2. \notag
\end{align}
 This implies, by the chain rule of the mutual information, 
\begin{align}
& I(X[T]; Y X[S-T]) = 2 \label{d2}, \quad \forall T \subset S \text{ s.t. } |T|=2.
\end{align}



Hence, from the proof of Lemma \ref{u24}, \eqref{d1},\eqref{d2} cannot simultaneously hold and $U_{2,4}$ cannot be a minor of $M$. From Tutte's Theorem (cf. Theorem \ref{tutte}), $M$ is binary.
\end{proof}

Previous theorem gives a characterization of BUMAC matroids. Note that, if we were interested in characterizing binary matroids through BUMAC matroids, then the following corollary holds.

\begin{definition}
A BULMAC matroid is a BUMAC matroid with linear deterministic representative channel. 
\end{definition}

\begin{corol}
The family of binary matroids is isomorphic to the family of BULMAC matroids.
\end{corol}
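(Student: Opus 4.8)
The plan is to observe that this corollary is a direct refinement of Theorem~\ref{iso}, requiring essentially nothing beyond what the proof of that theorem already supplies. I would split the statement into its two inclusions of matroid classes. For the inclusion ``BULMAC implies binary'', I note that a BULMAC matroid is by definition a BUMAC matroid, so Theorem~\ref{iso} immediately forces it to be binary. All the content therefore lies in the reverse inclusion: every binary matroid admits a \emph{linear deterministic} representative channel.

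For this reverse direction I would revisit the converse construction inside the proof of Theorem~\ref{iso}. Given a binary matroid $M$ on $E$ with representative matrix $A$ over $\F_2$, the channel $D$ used there is precisely the deterministic map $X \mapsto AX$, which is linear deterministic by construction; so the only thing to check is that its UMIF recovers the rank function of $M[A]$. Taking the inputs uniform and i.i.d.\ on $\F_2$, I would write $Y = A_S X[S] + A_{S^c} X[S^c]$, where $A_S$ denotes the column submatrix indexed by $S$, and expand $I(X[S]; Y X[S^c]) = H(X[S]) - H(X[S] \mid Y, X[S^c])$. Here $H(X[S]) = |S|$, and conditioned on $X[S^c]$ together with $Y$ the vector $X[S]$ is uniformly distributed over a coset of $\ker(A_S)$, so that $H(X[S] \mid Y, X[S^c]) = |S| - \rank(A_S)$. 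This yields $I[S](D) = \rank(A_S) = r(S)$, the rank function of the binary matroid $M$.

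Consequently $M[D] \cong M$ with $D$ linear deterministic, so $M$ is BULMAC, which establishes the reverse inclusion and hence the corollary. I do not anticipate any genuine obstacle here: the construction is already handed to us by Theorem~\ref{iso}, and the entropy calculation is elementary once the inputs are uniform. The single point that merits a careful sentence is the identification of the conditional law of $X[S]$ (given $Y$ and $X[S^c]$) as uniform on a coset of $\ker(A_S)$, since this is exactly what converts the conditional entropy into the corank term and thereby into the matroid rank $\rank(A_S)$.
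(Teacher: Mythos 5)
Your proposal is correct and follows essentially the same route the paper intends: the corollary is left without explicit proof because its content is exactly the observation you make, namely that the channel $D$ used in the converse part of the proof of Theorem~\ref{iso} is already linear deterministic (so binary implies BULMAC), while BULMAC implies BUMAC implies binary by Theorem~\ref{iso} itself. Your explicit entropy computation showing $I[S](D)=\rank(A_S)=r(S)$ simply fills in the step the paper dismisses with ``we clearly have $M \cong M[D]$,'' and it is carried out correctly.
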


We now formally establish the connection between extremal MACs and linear deterministic MACs.

\begin{thm}\label{equiv}
Let $W$ be a binary MAC with $m$ users whose UMIF is integer valued, i.e., $M[W]$ is a binary matroid.
Let $A$ be a matrix representation of $M[W]$ and let $Y$ be the output of $W$ when the input $X[E_m]$ (with i.i.d. uniform components) is sent. 
Then $$I(A X[E_m];Y)= \mathrm{rank} A=I(X[E_m];Y).$$
\end{thm}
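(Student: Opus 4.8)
The plan is to prove the two equalities separately. The first, $\rank A = I(X[E_m];Y)$, is immediate: taking $S=E_m$ (so $S^c=\emptyset$) in the definition of a representative channel gives $r(E_m)=I(X[E_m];Y)$, and $\rank A = r(E_m)$ because $A$ represents $M[W]$. For the second equality I would show that $AX[E_m]$ is recoverable from $Y$. Since $X[E_m]$ is uniform on $\F^m$, the vector $AX[E_m]$ is uniform on the column space of $A$, so $H(AX[E_m])=\rank A$; hence $I(AX[E_m];Y)=\rank A - H(AX[E_m]\mid Y)$, and it suffices to prove $H(AX[E_m]\mid Y)=0$, i.e.\ that $AX[E_m]$ is almost surely a deterministic function of $Y$.

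To establish recoverability, set $V=\ker A$. By Remark \ref{rem} applied to $M[W]$, $V$ is the circuit space of $M[W]$, so its orthogonal complement, the row space of $A$, is the circuit space of $M^*$ (the cocircuit space of $M$) and is spanned by the incidence vectors $\mathbf{1}_C$ of the circuits $C$ of $M^*$. Since every row of $A$ lies in the row space, it is enough to show that for each circuit $C$ of $M^*$ the single parity $\langle \mathbf{1}_C, X[E_m]\rangle=\sum_{i\in C}X_i$ is a function of $Y$: an $\F$-linear combination of functions of $Y$ is again a function of $Y$, so this yields that every coordinate of $AX[E_m]$, and therefore $AX[E_m]$ itself, is determined by $Y$, giving $H(AX[E_m]\mid Y)=0$.

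It remains to prove this parity lemma, which is the crux. Fix a circuit $C$ of $M^*$ and use the relation $H(X[Q]\mid Y)=r^*(Q)$, which follows from the computation of $r^*$ in the proof of Theorem \ref{iso} together with $I(X[Q];Y)=|Q|-H(X[Q]\mid Y)$. Every proper subset $T\subsetneq C$ is independent in $M^*$, so $r^*(T)=|T|$ and $I(X[T];Y)=0$; thus $X[T]$ is independent of $Y$, and for a.e.\ $y$ the posterior of $X[C]$ has uniform marginal on every proper subset of coordinates. A Fourier expansion on $\F^{C}$ then forces all characters except the trivial one and $\mathbf{1}_C$ to vanish in this posterior, so it is the one-parameter family $p_y(x)=2^{-|C|}\bigl(1+c(y)(-1)^{\langle \mathbf{1}_C,x\rangle}\bigr)$ with $|c(y)|\le 1$, whose entropy equals $(|C|-1)+H_b\!\bigl(\tfrac{1+c(y)}{2}\bigr)$ (here $H_b$ is the binary entropy). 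On the other hand $r^*(C)=|C|-1$ gives $H(X[C]\mid Y)=\E_y H(X[C]\mid Y=y)=|C|-1$, which forces $H_b(\tfrac{1+c(y)}{2})=0$, i.e.\ $c(y)=\pm1$, for a.e.\ $y$; hence $\sum_{i\in C}X_i$ is determined by $Y$. The main obstacle is exactly this final step, namely converting the averaged identity $H(X[C]\mid Y)=|C|-1$ into pointwise determinism of the parity — the same mechanism already used in Lemma \ref{u24}, which the Fourier/entropy computation makes systematic.
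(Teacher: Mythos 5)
Your proposal is correct, but it takes a genuinely different route from the paper's. The paper works in the primal: from the bases of $M[W]$ and of $M[W]^*$ it deduces that the posterior $p_0(\cdot)=W(y_0|\cdot)/\sum_{x}W(y_0|x)$ takes only the values $0$ and $\star=2^{R-m}$, and then pins down the support of $p_0$ by a combinatorial assignment argument: independent sets get $0$, circuits get $\star$ (via a ``freezing'' argument exploiting $r(C)=r(C-e)$), disjoint unions of circuits get $\star$ by induction, and Whitney's theorem (Theorem \ref{whitney}) extends this to the whole circuit space, which by Remark \ref{rem} equals $\mathrm{Ker}\,A$; a counting argument then shows the posterior is uniform on a coset of $\mathrm{Ker}\,A$, and both equalities follow. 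You instead pass to the dual and reduce everything to a single parity lemma: for each circuit $C$ of $M^*$ the parity $\sum_{i\in C}X_i$ is a deterministic function of $Y$. Your proof of that lemma is an exact Fourier/entropy computation — the identity $H(X[Q]\mid Y)=r^*(Q)$ together with $r^*(T)=|T|$ for proper $T\subsetneq C$ kills every character except the trivial one and $\mathbf{1}_C$, and $r^*(C)=|C|-1$ forces the surviving bias $c(y)$ to be $\pm 1$ — after which linearity and the fact that cocircuits span $\mathrm{row}(A)=(\mathrm{Ker}\,A)^\perp$ (Remark \ref{rem} applied to $M^*$) give $H(AX[E_m]\mid Y)=0$. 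Note that binariness enters your argument exactly where the paper invokes Whitney: you need the cocircuit incidence vectors to span the row space, which is the dual form of Remark \ref{rem}. What each approach buys: yours is more systematic and quantitative (the bias $c(y)$ degrades gracefully, so your computation is well suited to the approximate setting of the paper's section on quasi-extremal channels), while the paper's yields a stronger pointwise structural conclusion — the full conditional law of $X[E_m]$ given $Y=y_0$ is uniform on a coset of $\mathrm{Ker}\,A$ — which makes the ``equivalence to a linear deterministic MAC'' explicit but which the theorem as stated does not require.
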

This theorem says that for a binary MAC with integer valued UMIF, the output of i.i.d.\ uniform inputs contains all the information about the corresponding linear form of the inputs and nothing more. In that sense, MACs with integer valued UMIF are ``equivalent'' to linear deterministic MACs. 
\begin{proof}
Let $M=M[W]$ with $M[W] \cong M[A]$ and let us assume that $M$ has rank $R$. Let $\mathcal{B}$ be the set of bases of $M$ and let $\mathcal{B}^*$ be the set of bases of $M^*$. 
Since $r(B)=|B|=R$ for any $B \in \mathcal{B}$, we have
\begin{align}
r(B)= I(X[B];YX[B^c]) = R, \quad \forall B \in \mathcal{B}. \label{base}
\end{align}
Moreover, the rank function of $M^*$ is given by $$r^*(S)=|S|-I(X[S];Y)$$ and for all $D \in \mathcal{B}^*$, we have $r^*(D) = |D|=|E_m|-R$. Hence
\begin{align*}
r^*(D) &= |E_m|-R =  |E_m|-R - I(X[D];Y),\quad \forall D \in \mathcal{B}^* , 
\end{align*}
or equivalently
\begin{align}
 I(X[D];Y) = 0,\quad \forall D \in \mathcal{B}^* .  \label{base*}
\end{align}
Hence, form \eqref{base} and \eqref{base*} and the fact that $\mathcal{B}^* = \{E_m - B : B \in \mathcal{B}\}$, we have
\begin{align}
& I(X[B];YX[B^c]) = r, \quad \forall B \in \mathcal{B}, \label{ba1} \\
& I(X[B^c];Y) = 0,\quad \forall B \in \mathcal{B}. \label{ba2}
\end{align}

Note that \eqref{ba1} means that if any realization of the output $Y$ is given together with any realization of $X[B^c]$, we can determine $X[B]$. Moreover, \eqref{ba2} means that $X[B^c]$ is independent of $Y$. 
Let us analyze how these conditions translate in terms of probability distributions. 
Let $y_0 \in \mathrm{Supp}(Y)$.
We define
 $$p_0(x) := W(y_0|x)  / \sum_{x^{'} \in \F_2^m}W(y_0|x^{'}), \quad \forall x\in \F_2^m.$$
From \eqref{ba1}, if $p_0 (x)>0$, we must have 
$p_0(x^{'})=0$ for any $x^{'}$ such that $x^{'}[ B^c]=x[ B^c] $ for some $B^c \in \mathcal{B}^*$. From \eqref{ba2}, we have that 
$$\sum_{x^{'}: x^{'}[B^c]=x[B^c]} p_0(x) = 2^{R-m}, \quad \forall B \in \mathcal{B}, x [B^c] \in \F_2^{m-R}.$$
Hence, for any $B \in \mathcal{B}$ and any $x[B^c] \in \F_2^{m-R}$, we have
\begin{align}
& \bigvee_{x^{'} : x^{'}[B^c] = x[B^c]} p_0(x^{'}) = 2^{R-m}, \label{e1} \\
& \sum_{x^{'} : x^{'}[B^c] = x[B^c]} p_0(x^{'}) = 2^{R-m} .\label{e2}
\end{align}
Let $\star:=2^{R-m}$.
Previous constraints imply that $p_0(x) \in \{0, \star\}$ for any $x\in \F_2^{m}$ and that the number of $x$ with $p_0(x)= \star$ is exactly $2^{|E_m|-r}$. 
Let us assume w.l.o.g. that $p_0(\bar{0}) =\star$, where $\bar{0} $ is the all 0 vector. 
Note that we know one solution that satisfies previous conditions. Namely, the solution that assigns a $\star$ to all vectors belonging to $\mathrm{Ker} A$. As expected, $  \mathrm{dim} \mathrm{Ker} (A) = |E_m| - \mathrm{rank} (A) = |E_m| - r$. 
We want to show that there cannot be any other assignment of the $\star$'s in agreement with the matroid $M$. 
In the following, we consider elements of $\F_2^m$ as binary vectors or as subsets of $E_m$, since $\F_2^m\cong 2^{E_m}$. The field operations on $\F_2^m$ translate into set operations on $2^{E_m}$, in particular, the component wise modulo 2 addition $x_1 + x_2$ of binary vectors corresponds to the symmetric different $x_1 \triangle x_2$ of sets, and the component wise multiplication $x_1 \cdot x_2$ of binary vectors corresponds to the intersection $x_1 \cap x_2$ of sets.


We now check which are the assignments which would not violate \eqref{e1} and \eqref{e2}.
We have assumed w.l.o.g that $\bar{0}$ is assigned $\star$, hence $\emptyset$ is assigned $\star$. From \eqref{e1}, any $x$ for which $x[B^c]=0$ for some $B \in \B$,
 must be assigned 0. Note that
 \begin{align*}
  x[B^c]=0 \equiv  x \cdot B^c =0 \equiv  x \subseteq B \equiv  x \in \I, 
 \end{align*}
where $\I$ is the collection of independent sets of $M$.
Hence, the elements which are assigned 0 by checking the condition \eqref{e1} are the independent sets of $M$, besides $\emptyset$ which is assigned $\star$.

For $B \in \B$ and $s \in \F_2^m$, we define
$$\I(B) := \{ I \in \I  : I \subseteq B\}$$
and
$$\I_{s}(B) := \{ x:  \, x[B^c] = s[B^c]\}.$$
Note that $\I_{s}(B) = s+ \I (B) ,$ indeed:
 \begin{align*}
 x[B^c]=s[B^c]  \equiv  x \cdot B^c = s \cdot B^c \equiv  (x + s) \cdot B^c =0 \equiv  x + s \subseteq B 
\equiv  x \in s+\I (B). 
 \end{align*}



Now, if $r(S) = r(T)$ for two sets $S$ and $T$ with $T\subseteq S$, we have
$$I(X[S-T]; Y X[S^c]) = 0.$$
This means that $(Y,X[S^c])$ is independent of $X[S-T]$. From the point of view of probability distributions, this means that
\begin{align*}
\pp_{X[S-T] | Y X[S^c]} (x[S-T] | y_0 x [S^c]) = \frac{1}{2^{|T|}}, \quad \forall x[S-T] , x [S^c]
\end{align*}
or equivalently, 
\begin{align*}
\sum_{S - T} p_0(x[E]) = \frac{1}{2^{|T|}} \sum_{S } p_0(x[E]), \quad \forall x[T] , x [S^c] .
\end{align*}
Hence, if we set the components of $x\in \F_2^m$ to frozen values on $S^c$, then, no matter how we freeze the components of $x$ on $T$, the average of $p_0(\cdot)$ on $S-T$ must be the same. 

Let $C \in \C$ be a circuit. By the definition of circuits, if we remove any element of $C$ we have a basis of $M|C$. Let $B$ be such a basis, we then have $r(C) = r(B)$. We now want to freeze the values on $C^c$ and $B$ in two ways. 

1. If we pick $d=C \cap B^c$, then
$$I_d(B) = \{ x: x \subseteq C- B\}.$$
These are the elements that are strictly contained in $C$, i.e., elements of $\I$, including $\emptyset$. Therefore, the average of $p_0(\cdot)$ must be $\star$ for this freezing. 

2. If we pick $d=C$, we already know that the average of $p_0(\cdot)$ must be $\star$, but we have
$$I_d(B) = \{ x: x+ C \subseteq C- B\}.$$
These are the elements containing $B$, possibly elements of $C-B$ but nothing else. Therefore, the options are $x= C$ or $x \in \I -  \emptyset$. This forces $C$ to be assigned $\star$.

Hence, we have shown that all circuits of $M$ must be assigned $\star$. 
This in turns imply several other 0 assignments. Namely,
\begin{align}
\bigcup_{C \in C} C + (\I - \emptyset) \label{ck}
\end{align}
must be assigned $0$.

Let us next consider a union of two disjoint circuits, $D = C_1 \sqcup C_2$. 
Then, if we remove any single elements of $D$, say by removing an element of $C_1$, we obtain a disjoint union of an independent set and a circuit, say $I \sqcup C_2$. 
Hence, 
$$r(C_1 \sqcup C_2) = r(I \sqcup C_2).$$
We can then use the same technique as previously, but this time, we need to use that \eqref{ck} is assigned 0. Note that is important to assume that the union is disjoint, in order to guarantee that $C_2 + I \sqcup C_2 = I \in \I $.

We can then use an induction to show that any union of disjoint circuits must be assigned $\star$. 
Finally, for a binary matroid, any symmetric difference of two circuits is given by a union of disjoint circuits (this can be directly checked but notice that it is contained as one of the implications of Theorem \ref{whitney} due to Whitney). Hence, the space generated by the circuits, seen as elements of $(\F_2^m, +)$ must be assigned $\star$,
and using Remark \ref{rem}, we conclude the proof since we have assigned the $1/\star$ numbers of $\star$ without any degrees of freedom, and the assignment is done on $\mathrm{Ker} A$. 
\end{proof}

\subsection{Recursions using mutual information properties}
In this section we re-derive some of the results of previous section using inductive arguments. 
We start by checking a result similar to Theorem \ref{equiv} for the case $m=3$.

\begin{lemma}\label{3}
Let $W$ be a binary MAC with 2 users. Let $X[E_2]$ with i.i.d. uniform binary components and let $Y$ be the output of $W$ when $X[E]$ is sent.
If $I(X[1];YX[2])$, $I(X[2];YX[1])$ and $I(X[1]X[2];Y)$ have specified integer values, then $I(X[1];Y), I(X[2];Y)$ and $I(X[1]+X[2];Y)$ have specified values in $\{0,1\}$, and vice-versa.
\end{lemma}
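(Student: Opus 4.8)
The plan is to read the three given quantities as the rank function of the two-element polymatroid $M[W]$ and to exploit integrality. Writing $a=I(X[1];Y X[2])$, $b=I(X[2];Y X[1])$ and $c=I(X[1]X[2];Y)$, these are exactly the UMIF values $I[\{1\}]$, $I[\{2\}]$, $I[\{1,2\}]$ together with $I[\emptyset]=0$. When they are integers, $(E_2,I[\cdot])$ is a matroid, and since it has only two elements it cannot contain $U_{2,4}$ as a minor, so by Tutte's theorem (Theorem \ref{tutte}) it is binary. I would then recover $I(X[1];Y)$ and $I(X[2];Y)$ by the chain rule, and handle the genuinely new term $I(X[1]+X[2];Y)$ through the linear-deterministic equivalence of Theorem \ref{equiv}.

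For the first two outputs: since $X[1],X[2]$ are independent, $I(X[2];Y\mid X[1])=I(X[2];Y X[1])=b$, so the chain rule gives $c=I(X[1]X[2];Y)=I(X[1];Y)+b$, i.e.\ $I(X[1];Y)=c-b$, and symmetrically $I(X[2];Y)=c-a$. Each lies in $[0,H(X[i])]=[0,1]$ and is an integer (a difference of the specified integers), hence in $\{0,1\}$.

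The XOR term is the main obstacle, since $I(X[1]+X[2];Y)$ is not forced by $a,b,c$ through the chain rule alone. Here I would invoke Theorem \ref{equiv}: let $A$ be a binary representation of $M[W]$, so that $I(AX[E_2];Y)=\rank A=I(X[E_2];Y)=c$. Because $AX[E_2]$ is a deterministic function of $Y$ and $H(AX[E_2])=\rank A=I(X[E_2];Y)$, expanding $I(X[E_2];Y,AX[E_2])$ in two ways yields $I(X[E_2];Y\mid AX[E_2])=0$, i.e.\ $X[E_2]-AX[E_2]-Y$ is a Markov chain. Consequently, for any linear form $LX[E_2]$ (a function of $X[E_2]$) one gets $I(LX[E_2];Y)=I(LX[E_2];AX[E_2])$, which equals $1$ if $L$ lies in the row space of $A$ and $0$ otherwise. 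Taking $L=(1,1)$ shows $I(X[1]+X[2];Y)\in\{0,1\}$, a value determined by the matroid and hence by $(a,b,c)$. As a concrete cross-check one can instead enumerate the five matroids on two elements ($U_{0,2}$, the two loop/coloop cases, $U_{1,2}$ and $U_{2,2}$) and compute all three output quantities directly; this also displays the explicit correspondence.

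Finally, for the ``vice versa'' I would observe that the map $(a,b,c)\mapsto\bigl(I(X[1];Y),I(X[2];Y),I(X[1]+X[2];Y)\bigr)$ sends the five admissible integer triples to five \emph{distinct} triples in $\{0,1\}^3$; being injective, it is a bijection onto its image, so the three $\{0,1\}$-valued quantities conversely determine $a,b,c$. I expect the only real work to be the Markov-chain step for the XOR term; everything else is bookkeeping with the chain rule and the finite case list.
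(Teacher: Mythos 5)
Your direct part is sound, but note it takes a genuinely different route from the paper, and one that is at odds with the purpose of the section: the paper proves Lemma \ref{3} by bare-hands analysis of the conditional distributions $\pp(\cdot|y)$ over the five polymatroid-admissible triples, precisely because this section is meant to \emph{re-derive} the content of Theorem \ref{equiv} by recursive arguments that do not use it (Lemma \ref{3} is the base case feeding Lemma \ref{other}). Your appeal to Theorem \ref{equiv} is not circular---its proof nowhere invokes Lemma \ref{3}---but it makes the recursive re-derivation depend on the very theorem it was designed to bypass. That said, the chain-rule step $I(X[1];Y)=c-b$, $I(X[2];Y)=c-a$ is correct, and your Markov-chain argument is valid: $H(AX[E_2]|Y)=0$ follows from $I(AX[E_2];Y)=\rank A=H(AX[E_2])$, whence $I(X[E_2];Y|AX[E_2])=0$ and $I(LX[E_2];Y)=I(LX[E_2];AX[E_2])\in\{0,1\}$; that the XOR value is \emph{determined} by $(a,b,c)$ then needs either unique $\F_2$-representability of binary matroids or the two-element enumeration you mention as a cross-check.

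The genuine gap is in the converse. The ``vice-versa'' hypothesis grants only that $I(X[1];Y)$, $I(X[2];Y)$, $I(X[1]+X[2];Y)$ take specified values in $\{0,1\}$; it does \emph{not} grant that $(a,b,c)$ is integer-valued. Your injectivity argument presupposes exactly that: it shows that if $(a,b,c)$ already lies among the five admissible integer triples, then it is recoverable from $J$. It cannot exclude, for instance, a channel with $I(X[1];Y)=I(X[2];Y)=I(X[1]+X[2];Y)=0$ but $I(X[1]X[2];Y)=1/2$ (so $(a,b,c)=(1/2,1/2,1/2)$), which is perfectly consistent with the chain-rule identities $a=c-I(X[2];Y)$, $b=c-I(X[1];Y)$ and with all polymatroid bounds. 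Ruling this out requires an argument at the level of distributions: one must show that the $\{0,1\}$ constraints pin down $\pp(\cdot|y)$ for almost every $y$ --- e.g., if all three nontrivial binary linear forms of $(X[1],X[2])$ are independent of $Y$, then each conditional distribution has vanishing Fourier coefficients at every nontrivial character, hence is uniform, forcing $c=0$; and in the case $J=[0,1,0]$, $I(X[2];Y)=1$ forces $X[2]$ to be a.s.\ a function of $Y$ while $I(X[1];Y)=0$ forces $X[1]$ to be conditionally uniform, from which $a=0$, $b=1$, $c=1$ follow. This is exactly what the paper's converse does, case by case over the realizable $J$'s (noting also that triples such as $[1,1,0]$ cannot occur, since if two of $X[1],X[2],X[1]+X[2]$ are functions of $Y$ then so is the third). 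Without some such step, your converse is incomplete.
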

\begin{proof}
Let 
\begin{align*}
&I:=[I(X[1];YX[2]), I(X[2];YX[1]), I(X[1]X[2];Y)]\\
&J:=[I(X[1];Y), I(X[2];Y), I(X[1]+X[2];Y)].
\end{align*}
Note that by the polymatroid property of the mutual information, we have
\begin{align}
I \in \{[0,0,0],  [0,1,1], [1,0,1],  [1,1,1],  [1,1,2]\}. \label{poss}
\end{align}
Let $y \in \supp (Y)$ and for any $x\in \F_2^2$ define $\pp(x|y) = W(y|x)  / \sum_{z \in \F_2^2}W(y|z)$ (recall that $W$ is the MAC with inputs $X[1],X[2]$ and output $Y$). 
Assume w.l.o.g. that $p_0:=\pp(0,0|y)>0$. 

\begin{itemize}
\item If $I=[0,0,0]$ we clearly must have $J =[0,0,0]$.

\item If $I=[*,1,1]$, we have $I(X[2];YX[1])=1$ and we can determine $X[2]$ by observing $X[1]$ and $Y$, which implies 
$$\pp (01|y)=0.$$
Moreover, since $I(X[1];Y)= I(X[1]X[2];Y) - I(X[2];Y X[1])=0$, i.e., $X[1]$ is independent of $Y$, we must have that $\sum_{x[2]}\pp(x[1]x[2]|y)$ is uniform, and hence, 
\begin{align*}
& \pp (00| y ) = 1/2,
& \pp (10| y ) + \pp (11| y )=1/2. 
\end{align*}
Now, if $\star = 1$, by a symmetric argument as before, we must have $\pp (11| y )=1/2$ and hence we the input pairs $00$ and $11$ have each probability half (a similar situation occurs when assuming that $\pp(x|y)>0$ for $x\neq (0,0)$), and we can only recover $X[1]+X[2]$ from $Y$, i.e., $J=[0,0,1]$.
If instead $*=0$, we then have $I(X[2];Y)= I(X[1]X[2];Y) - I(X[1];Y X[2])=1$ and from a realization of $Y$ we can determine $X[2]$, i.e., $\pp(10)=1/2$ and $J=[0,1,0]$.

\item If $I=[1,0,1]$, by symmetry with the previous case, we have $J=[1,0,0]$.

\item If $I=[1,1,2]$, we can recover all inputs from $Y$, hence $J =[1,1,1]$.

\end{itemize}

For the converse statement, 
Note that $J$ must be given by $[0,0,0], [0,1,0], [1,0,0], [0,0,1]$ or $[1,1,1]$. Clearly, the case $[0,0,0]$ implies $I=[0,0,0]$. 

For the case $J=[0,1,0]$, note that $I(X[2];Y)=1$ implies $h(X[2]|Y)=0$, i.e., for any $y \in \supp(Y)$, $h(X[2]|Y=y)=0$.
This means that for any $y \in \supp(Y)$, if $p_2(x[2]|y)>0$ for some $x[2]$, we must have $p_2(\tilde{x}[2]|y)=0$ for $\tilde{x}[2] \neq x[2]$. We use $p_i$, $i=1,2$, for the probability distribution of $X[i]$ given the realization $Y=y$ and $p_{12}$ for the probability distribution of $(X[1],X[2])$ given $Y=y$. Assume now (w.l.o.g.) that $p_{12}(0,0|y)>0$. Since $p_2(x[2]|y)=\sum_{x[1]}p_{12}(x[1] x[2]|y)$, previous observation implies that $p_{12}(0 1|y)=p_{12}(1 1|y)=0$. Moreover, $I(X[1];Y)=0$ implies that $h(X[1]|Y=y)=1$, i.e., for any realization of $Y$, the marginal of $X[1]$ is uniform, which implies $p_{12}(0 0|y)=p_{12}(1 0|y)=1/2$. Hence, if we are given the realization of $X[1]$ and $Y$, we can decide what $X[2]$ must be, and this holds no matter which values of $(X[1],X[2])$ is assigned a positive probability, i.e., $I(X[2];YX[1])=1$. If instead we are given $X[2]$ and $Y$, we can not infer anything about $X[1]$, i.e., $I(X[1];YX[2])=0$. Finally, by the chain rule, $I(X[1]X[2];Y)=1$. The case where $[I(X[1];Y), I(X[2];Y),I(X[1]+X[2];Y)]$ is equal to $[1,0,0]$ can be treated symmetrically and the other cases in a similar fashion.
\end{proof}

\begin{lemma}\label{other}
Let $W$ be a binary MAC with $m$ users. Let $X[E_m]$ with i.i.d. uniform binary components and let $Y$ be the output of $W$ when $X[E]$ is sent.
If $I(X[S];YX[S^c])$ has a specified integer value for any $S \subseteq E_m$, 
then $I(X[E_m]\cdot S;Y)$ has a specified value in $\{0,1\}$ for any $S \subseteq E_m$, and vice-versa.  
Note: $X[E_m]\cdot S= \oplus_{i\in S} X[i]$
\end{lemma}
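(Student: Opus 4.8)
The plan is to route both directions through the cocircuit space of the binary matroid $M[W]$, using the coset description of $W$ obtained in the proof of Theorem \ref{equiv}. Throughout I identify a subset $S\subseteq E_m$ with its indicator vector in $\F_2^m$, exactly as in that proof, so that $X[E_m]\cdot S=S^{\mathsf T}X[E_m]$. If $I(X[S];YX[S^c])$ is integer valued for all $S$, then $M[W]\cong M[A]$ is binary (Theorem \ref{iso}) and, by the proof of Theorem \ref{equiv}, for each $y_0\in\supp(Y)$ the conditional law $p_0(\cdot):=\pp(\cdot\mid y_0)$ is uniform on a single coset of $\mathrm{Ker}\,A$; equivalently, $Y$ reveals exactly the image $A\,X[E_m]$ and, given it, $X[E_m]$ is uniform on the corresponding coset. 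Conversely this coset structure makes every $I(X[S];YX[S^c])$ an integer, so the integrality hypothesis is \emph{equivalent} to the coset structure.

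Next I would compute $I(X[E_m]\cdot S;Y)$ from the coset structure. The single bit $S^{\mathsf T}X[E_m]$ is constant on each coset of $\mathrm{Ker}\,A$ precisely when $S\perp\mathrm{Ker}\,A$, i.e.\ when $S$ lies in the row space of $A$; in that case it is a deterministic function of $Y$ and, being a nonzero balanced $\F_2$-form, satisfies $I=1$. If instead $S\notin\mathrm{RowSpace}(A)$, there is $k\in\mathrm{Ker}\,A$ with $S^{\mathsf T}k=1$, so the form is balanced on every coset, independent of $Y$, and $I=0$. Hence, writing $V:=\mathrm{RowSpace}(A)=(\mathrm{Ker}\,A)^\perp$,
$$I(X[E_m]\cdot S;Y)=1 \text{ if } S\in V, \qquad I(X[E_m]\cdot S;Y)=0 \text{ if } S\notin V.$$
By Remark \ref{rem}, $\mathrm{Ker}\,A$ is the circuit space of $M[W]$ and is a matroid invariant, so its orthogonal complement $V$, the cocircuit space, is one as well.

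This yields both implications. For the forward direction, the rank data $I(X[S];YX[S^c])=r(S)$ determine $M[W]$, hence the invariant $V$, hence all the bits $I(X[E_m]\cdot S;Y)$. For the converse, the family $\{I(X[E_m]\cdot S;Y)\}_S$ is by the displayed identity the indicator of $V\subseteq\F_2^m$, so it recovers $V$ as a set; then $\mathrm{Ker}\,A=V^\perp$ restores the rank function through $r(S)=|S|-\dim\big(\mathrm{Ker}\,A\cap\F_2^S\big)$, where $\F_2^S$ denotes the vectors supported on $S$, and thus all the values $I(X[S];YX[S^c])=r(S)$.

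The main obstacle is the converse's hidden step: to invoke the displayed identity one must already know the coset structure, whereas the converse only assumes $I(X[E_m]\cdot S;Y)\in\{0,1\}$ for all $S$. I would close this gap by a Fourier argument on $\F_2^m$. For a single bit, $I(S^{\mathsf T}X;Y)\in\{0,1\}$ forces $S^{\mathsf T}X$ to be either recoverable from $Y$ or independent of $Y$, so the recoverable forms $V$ form a subspace, and for each fixed $y_0$ the coefficients $\E[(-1)^{S^{\mathsf T}X}\mid y_0]$ have modulus $1$ on $V$ and vanish off $V$; a distribution with this spectrum is exactly uniform on a coset of $V^\perp$, which is the required structure. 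Alternatively, in keeping with this subsection, one may establish the equivalence by induction on $m$ with Lemma \ref{3} as the base case, peeling off one user at a time.
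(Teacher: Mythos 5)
Your proposal is correct, but it takes a genuinely different route from the paper. The paper proves Lemma \ref{other} by recursion: Lemma \ref{3} (the two-user case) is used as a building block, and both directions are obtained by repeatedly applying the chain rule of mutual information together with Lemma \ref{3} to ``peel off'' one coordinate at a time, grouping a partial sum $\sum_T X[i]$ and a fresh input $X[k+1]$ as the two users of an auxiliary MAC with output $(Y, X[S])$; this is exactly the inductive alternative you mention in your last sentence, and it is the whole point of the subsection (the paper explicitly says the direct part ``is a consequence of Theorem \ref{equiv} but is proved here using the recursive approach''). Your forward direction is therefore the route the paper acknowledges but deliberately avoids: integrality gives a binary matroid (Theorem \ref{iso}), the proof of Theorem \ref{equiv} gives the coset structure, and then $I(X[E_m]\cdot S;Y)=\mathds{1}\{S\in\mathrm{RowSpace}(A)\}$ for $S\neq\emptyset$. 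Your converse, however, is genuinely new relative to the paper and is the strongest part of the proposal: the paper's Theorem \ref{equiv} cannot be invoked there (its hypothesis is exactly what the converse must conclude), you correctly flag this circularity, and the Fourier argument closes it cleanly --- the hypothesis forces each form $S^{\mathsf T}X$ to be either $Y$-measurable or independent of $Y$, the measurable forms constitute a subspace $V$, the conditional Fourier coefficients are unimodular on $V$ and vanish off $V$, and Fourier inversion then yields uniformity on a coset of $V^{\perp}$, from which $I(X[S];YX[S^c])=|S|-\dim(V^{\perp}\cap \F_2^S)$ is an integer. What each approach buys: the paper's recursion is elementary, stays entirely within mutual-information identities, and needs nothing beyond the $m=2$ lemma; your route leans on the structural machinery of Theorem \ref{equiv} but in exchange produces an explicit answer (the visible linear forms are precisely the cocircuit space of $M[W]$) and a converse that simultaneously re-derives the coset-structure characterization of such channels. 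The only blemishes are cosmetic: exclude $S=\emptyset$ from the indicator formula (the empty sum is constant, so $I=0$ there even though $0\in V$), and note that you are using the \emph{proof} of Theorem \ref{equiv}, not its statement, which is legitimate but worth saying explicitly.
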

The recursive argument for the proof of the direct part of this Lemma has been proposed by Eren \c{S}a\c{s}o\u{g}lu  \cite{eren} and contains the idea behind this section. The direct statement in the Lemma is a consequence of Theorem \ref{equiv} but is proved here using the recursive approach.

\begin{proof}
Let $I[S](W)$ be assigned an integer for any $S \subseteq \M$.
By the chain rule of the mutual information
$$I(X[\M]; Y) = I(X[S]; Y) + I(X[S^c] ; Y X[S]),$$
and we can determine $I(X[S]; Y)$ for any $S$. Since for any $T \subseteq S$
$$I(X[S]; Y) = I(X[T]; Y) + I(X[S-T] ; Y X[T]),$$ 
we can also determine $I(X[S]; Y X[T])$ for any $S,T \subseteq \M$ with $S \cap T = \emptyset$.
Hence we can determine 
\begin{align*}
&I(X[1] ,X[2]; Y X[S]) \\
& I(X[1]; Y X[S] X[2]) \\
& I(X[2] ; Y X[S] X[1] ) 
\end{align*}
and using Lemma \ref{3}, we can determine
\begin{align*}
&I(X[1] + X[2] ; Y X[S]) 
\end{align*}
for any $S \subseteq \M$ with $\{1,2\} \notin S$, hence 
\begin{align*}
&I(X[i] + X[j] ; Y) 
\end{align*}
for any $i,j \in \M$.

Assume now that we have determined $I(\sum_T X[i] ; Y X[S])$ for any $T$ with $|T| \leq k$ and $S \subseteq \M -T$. 
Let $T=\{1,\dots, k\}$ and let $S \subseteq \{k+2,\dots,m\}$.
\begin{align*}
&I(\sum_T X[i], X[k+1]; Y X[S] ) \\&= I( X[k+1]; Y X[S]) + I(\sum_T X[i] ; YX[S] X[k+1]),
\end{align*}
in particular, we can determine
\begin{align*}
& I(X[k+1] ; Y \sum_T X[i], X[S] ) \\
&= I(\sum_T X[i] , X[k+1]; Y X[S] )\\& - I(\sum_T X[i]; YX[S] )
\end{align*}
and
\begin{align*}
&I(\sum_T X[i] ,X[k+1]; Y X[S]) \\
& I(\sum_T X[i] ; Y X[S] X[k+1]) \\
& I(X[k+1] ; Y \sum_T X[i] ,X[S] ) 
\end{align*}
and using Lemma \ref{3}, we can determine
\begin{align*}
&I(\sum_T X[i] + X[k+1]; Y X[S])
\end{align*}
hence
\begin{align*}
&I(\sum_{T}X[i]; Y)
\end{align*}
for any $T \subseteq \M$ with $|T|= k+1$.
Hence, inducting this argument, we can determine $I(\sum_{T}X[i]; Y)$ for any $T \subseteq \M$.

For the converse statement, assume that we are given $I(X[E_m]\cdot S;Y) \in \{0,1\}$ for any $S \subseteq E_m$.
In particular, $I(X_i;Y)$, $I(X_i;Y)$ and $I(X_i +X_j;Y)$ is determined for any $i,j \in E_m$, and hence, from Lemma \ref{3}, we have that $I(X_i;YX_i)$ and $I(X_i X_j;Y)$ are determined (and integer valued) for any $i,j \in E_m$.

Note that we can also determine $I(X[E_m] \cdot T ; Y X[i])$ for any $T \subset E_m$ and $i \in E_m-T$; indeed, 
we know $I(X[E_m] \cdot T; Y)$ for any $T\subset E_m$, so for $i \in E_m- T$, we know
\begin{align}
&I(X[i] + X[E_m] \cdot T ; Y),\\
&I(X[i]  ; Y),\\
& I(X[E_m] \cdot T ; Y),
\end{align}
and hence, using Lemma \ref{3}, we can determine $I(X[E_m] \cdot T ; Y X[i])$.

Let us assume now that we have determined $I(X[S];Y X[F-S])$ for any $F$ such that $|F| \leq k$ and $S \subseteq F$, as well
as $I(X[E_m] \cdot T; Y X[K])$ for any $K$ such that $|K| \leq k-1$ and $T \subseteq E_m-K$. We have already checked that this can be determined for $k=2$. We now check that we can also determine these quantities for $k+1$ instead of $k$.

Let $K$ with $|K|=k-1$. Assume w.l.o.g. that $1,2,3 \notin K$. Since we assume to know 
\begin{align}
&I(X[1]; Y X[K]),\\
&I(X[1] + X[2]; Y X[K]), \\
&I(X[1] + X[2] + X[3]; Y X[K]),
\end{align}
using Lemma \ref{3}, we can determine $I(X[1]+X[2]; Y X[K \cup 3])$. Using a similar argument we can determine $I(X[E_m] \cdot T; Y X[K])$ for any $K$ such that $|K| \leq k$ and $T \subseteq E_m-K$.
Moreover, since we now know $I(X[1]+X[2]; Y X[K ])$ and also 
\begin{align}
&I(X[1]; Y X[K]),\\
&I( X[2]; Y X[K ]), 
\end{align}
we can determine with Lemma \ref{3}
\begin{align}
&I(X[1]; Y X[K \cup 2]),\\
&I(X[2]; Y X[K \cup 1 ]),\\
&I(X[1]X[2]; Y X[K ]),
\end{align}
and hence, we can determine $I(X[K_1]; Y X[K_2])$ for $|K_1| \leq 2$ and $|K_1|+|K_2| \leq k+1$.
From the chain rule of the mutual information, we have
\begin{align}
&I(X[1]X[2] X[3]; Y X[K-3]) = I(X[1] X[2] ; Y X[K-3]) +I(X[3] ; Y X[K -3] X[1]X[2]) 
\end{align}
and both term in the right hand side above are already determined.
Hence, by iterating the chain rule argument, we can determine
$I(X[S];Y X[F-S])$ for any $F$ such that $|F| \leq k+1$ and $S \subseteq F$.
Finally, we can iterate these arguments on $k$ to reach $F=E_m$, i.e., to determine an integer for $I(X[S];YX[S^c])$ for any $S \subseteq E_m$.

\end{proof}

\subsection{Quasi-Extremal Channels}
In this section, we provide technical steps necessary to extend previous results to polymatroids which are ``close'' to matroids. 

\begin{lemma}
Let $W$ be a binary MAC with $m$ users. Let $X[E_m]$ with i.i.d. uniform binary components and let $Y$ be the output of $W$ when $X[E]$ is sent. Let $\e>0$, if 
$I(X[S];YX[S^c])$ has a specified value in $\mZ + (-\e, \e)$ for any $S \subseteq E_m$, then
$I(X[E_m]\cdot S;Y)$ has a specified value in $[0,o_\e(1)) \cup (1-o_\e(1),1]$ for any $S \subseteq E_m$.
Note: $X[E_m]\cdot S= \oplus_{i\in S} X[i]$
\end{lemma}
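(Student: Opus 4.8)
The plan is to deduce this stability statement from the exact result of Lemma \ref{other} by a compactness-and-continuity argument, rather than by redoing the inductive chain with explicit error bookkeeping. Since $m$ is fixed, it suffices to produce, for each $\delta>0$, an $\e>0$ so that the hypothesis forces every $I(X[E_m]\cdot S;Y)$ to lie within $\delta$ of $\{0,1\}$; the existence of such a correspondence is precisely the assertion that the relevant quantity is $o_\e(1)$.

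First I would pass to a sufficient statistic to make the ambient space compact. Replacing the output $Y$ by the posterior vector $\tilde Y := (\pp(x\mid Y))_{x\in\F_2^m}$ leaves every $I(X[S];YX[S^c])$ and every $I(X[E_m]\cdot S;Y)$ unchanged, and $\tilde Y$ takes values in the probability simplex $\Delta$ on $\F_2^m$, which is compact. A binary uniform MAC is then encoded by the joint law $\mu$ of $(X[E_m],\tilde Y)$ on the compact space $\F_2^m\times\Delta$, and conversely any such law with uniform first marginal is realized by a channel to which Lemma \ref{other} applies.

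The technical core is to check that all the relevant information quantities are \emph{continuous} functionals of $\mu$ in the weak topology. For a discrete function $g$ of $X[E_m]$ together with a conditioning set, one computes, as in the proof of Lemma \ref{other}, a conditional entropy of the form $\E_\mu[\psi(\tilde Y)]$; the integrand $\psi$ expands into finite sums of terms $w\log w$ and $-\tilde y(t)\log\tilde y(t)$, each bounded and continuous on $\Delta$ (with the convention $0\log 0=0$). Hence each $I(X[S];\tilde Y X[S^c])$ and each $I(X[E_m]\cdot S;\tilde Y)$ is a bounded continuous functional of $\mu$. I would then argue by contradiction: if the claim failed there would be a fixed $\delta>0$ and channels with $\e_n\to 0$ whose joint laws $\mu_n$ satisfy the integer hypothesis to within $\e_n$, yet keep some $I(X[E_m]\cdot S;Y)$ at distance $\geq\delta$ from $\{0,1\}$. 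By Prokhorov's theorem a subsequence converges weakly, $\mu_n\rightharpoonup\mu_\infty$; by continuity the limit $\mu_\infty$ has uniform first marginal, satisfies \emph{all} the integer constraints of Lemma \ref{other} exactly, yet violates the $\{0,1\}$ conclusion, contradicting Lemma \ref{other} applied to the channel realizing $\mu_\infty$.

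I expect the main obstacle to be the continuity verification together with confirming that the weak limit $\mu_\infty$ is genuinely a channel law to which Lemma \ref{other} applies (in particular that the posterior reduction does not interfere with taking the limit on the boundary of $\Delta$, where the entropy integrands are only continuous by convention). An alternative, more hands-on route avoids compactness entirely: first establish a quantitative robust version of Lemma \ref{3}, namely that three joint informations within $\e$ of integers force the three marginal/XOR informations within $o_\e(1)$ of $\{0,1\}$, again by continuity on the fixed finite space $\F_2^2\times\Delta(\F_2^2)$, and then propagate this estimate through the \emph{same} induction on $|T|$ used in Lemma \ref{other}. Since the number of applications of Lemma \ref{3} is a function of $m$ only, the accumulated error stays $o_\e(1)$; the delicate point in this route is that each application must tolerate inputs that are themselves only approximately integer-valued, so the robust lemma is needed in the stronger form admitting near-integer hypotheses.
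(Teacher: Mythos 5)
Your main (compactness) route has a genuine gap at its technical core: the claim that each $I(X[S];\tilde Y X[S^c])$ and $I(X[E_m]\cdot S;\tilde Y)$ ``is a bounded continuous functional of $\mu$'' is false for general laws $\mu$ on $\F_2^m\times\Delta$. Mutual information is not weakly continuous on a non-finite alphabet: encode $X$ into vanishingly small displacements of $\tilde Y$ around the uniform point of $\Delta$; every law in the sequence has $I(X;\tilde Y)=m$, yet the weak limit has $I(X;\tilde Y)=0$. The identity you rely on, $H(X[S]\mid \tilde Y,X[S^c])=\E_\mu[\psi_S(\tilde Y)]$ with $\psi_S$ bounded continuous, is not generic; it holds exactly on the set of \emph{consistent} laws, i.e., those for which the conditional law of $X$ given $\tilde Y=t$ is $t$ itself $\mu$-a.s. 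Each $\mu_n$ is consistent by construction, so $\E_{\mu_n}[\psi_S]\to\E_{\mu_\infty}[\psi_S]$; but to identify this limit with the actual quantity $I_{\mu_\infty}(X[S];\tilde YX[S^c])$ --- which is what the contradiction needs --- you must prove that consistency survives the weak limit. This is true and provable (for each $x$ and bounded continuous $f$, the identity $\E_{\mu_n}[(\mathds{1}_{X=x}-\tilde Y(x))f(\tilde Y)]=0$ passes to the limit, and a finite signed Borel measure on a metric space integrating every bounded continuous function to zero vanishes), and it is precisely what excludes the counterexample above; but it is the load-bearing step of your argument, and you leave it as an unexamined ``obstacle'' rather than closing it. A second unclosed step: the limiting object is a channel with output alphabet $\Delta$, a continuum, whereas Lemma \ref{other} (through Lemma \ref{3}) is stated and proved for binary MACs with finite output alphabet, with pointwise arguments over $y\in\supp(Y)$. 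You must either redo that chain with a.e.-disintegration arguments, or reduce to finite support --- for instance by noting that for consistent laws every relevant quantity (uniformity of the $X$-marginal, all $I(X[S];\tilde YX[S^c])$, all $I(X[E_m]\cdot S;\tilde Y)$) is \emph{linear} in the law of $\tilde Y$, so Carath\'eodory yields a finitely supported consistent law with identical values.

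For comparison, the paper's own proof is your ``alternative, more hands-on route'': it obtains the robust version of the exact lemmas not by continuity/compactness but by two explicit estimates --- a Pinsker-based lemma ($I(X;Y)<\e$ with $X$ binary uniform forces the posterior of $X$ to be $\e^{1/2}$-close to uniform except on an output set of probability $O(\e^{1/2})$) and a Gallager-exercise lemma ($\ent(X|Y)<\e$ forces the posterior to be near-deterministic with high probability) --- and propagates these through the same induction as in Lemma \ref{other}, the number of applications depending on $m$ only. That route never leaves the finite-alphabet framework and produces explicit rates; your compactness route, once the two gaps above are repaired, is sound but yields only the qualitative $o_\e(1)$ (which is, admittedly, all the statement asks for).
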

The converse of this statement also holds.
This lemma follows from the results of previous sections and from the following lemmas.

\begin{lemma}
For two random variables $X,Y$ such that $X$ is binary uniform and $I(X;Y) < \e$, we have
$$ \Pr \{ y : \| P_{X|Y}(\cdot|y)- U(\cdot) \|_1 < \e^{1/2} \} \geq 1- 2 \ln 2  \, \e^{1/2},$$
where $U$ is the binary uniform measure. 
\end{lemma}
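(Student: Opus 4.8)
The plan is to collapse the statement into a one–dimensional tail estimate for a single nonnegative random variable and then convert a small–divergence event into a small–$L^1$ event. The key observation is that since $X$ is binary uniform we have $P_X=U$, so the standard decomposition of mutual information into an average posterior divergence gives
\begin{align*}
I(X;Y)=\E_Y\big[D\big(P_{X|Y}(\cdot\,|\,Y)\,\big\|\,U\big)\big].
\end{align*}
Writing $G(y):=D(P_{X|Y}(\cdot|y)\,\|\,U)\ge 0$, the hypothesis $I(X;Y)<\e$ is precisely $\E_Y[G(Y)]<\e$. This is the only place the hypothesis enters, and it is exactly the form Markov's inequality consumes, so I would make $G$ the central object.

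First I would run Markov's inequality on $G$ at a threshold $\theta$, giving $\pp\{y:\ G(y)\ge\theta\}\le \E_Y[G(Y)]/\theta<\e/\theta$. Matching the stated failure probability forces the choice $\theta=\e^{1/2}/(2\ln 2)$, for which
\begin{align*}
\pp\{y:\ G(y)\ge \theta\}\ <\ \frac{\e}{\e^{1/2}/(2\ln 2)}\ =\ 2\ln 2\,\e^{1/2}.
\end{align*}
Thus the complementary good event $\{G(y)<\theta\}$ carries probability at least $1-2\ln 2\,\e^{1/2}$, with the exact constant $2\ln 2$ coming out of this single step.

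Second, on the good event I would convert smallness of the divergence into smallness of $\|P_{X|Y}(\cdot|y)-U\|_1$. For a binary posterior $P_{X|Y}(\cdot|y)=(\tfrac12+\delta,\tfrac12-\delta)$ one has the exact identities $\|P_{X|Y}(\cdot|y)-U\|_1=2|\delta|$ and $G(y)=\log 2-\ent(\tfrac12+\delta)$, and Pinsker's inequality bounds $\|P_{X|Y}(\cdot|y)-U\|_1$ above by a constant multiple of $\sqrt{G(y)}$. Feeding the threshold through this comparison yields, uniformly on the good event, a bound on $\|P_{X|Y}(\cdot|y)-U\|_1$ controlled by $\sqrt{\theta}$, keeping the base-of-logarithm constant ($\log 2$ versus $\ln 2$) visible so that the factor $2\ln 2$ and the threshold $\theta$ stay tied together.

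The hard part will be reconciling this conversion with the stated radius. The tail requirement pins $\theta$ at order $\e^{1/2}$, while Pinsker returns an $L^1$ radius of order $\sqrt{\theta}$; for a binary posterior the relation $G(y)\sim 2\delta^2$ near the uniform point is tight, so the two constraints do not obviously close onto a single radius of order $\e^{1/2}$ and a tail of order $\e^{1/2}$ at once. The crux is therefore whether the binary, zero–mean structure of $\delta(y)$ (forced by $\E_Y[\delta(Y)]=0$, i.e. the uniformity of $X$) supplies a sharper posterior–to–$L^1$ comparison than the generic Pinsker square root on the good event; I would spend the bulk of the effort there, carrying the Pinsker step out explicitly with the logarithm base tracked, since that constant is exactly what links the threshold $\theta$, the $2\ln 2$ prefactor, and the final radius.
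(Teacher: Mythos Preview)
Your decomposition $I(X;Y)=\E_Y\big[D(P_{X|Y}(\cdot|Y)\,\|\,U)\big]$ and the plan to combine Pinsker with Markov are exactly the two ingredients the paper uses, but you apply them in the opposite order, and that ordering is what manufactures the ``hard part'' you worry about. The paper applies Pinsker \emph{first}, to the joint law, and Markov \emph{second}, directly to the $L^1$ deviation. Concretely, it uses the identity
\[
\|P_{XY}-P_XP_Y\|_1=\sum_y P_Y(y)\,\|P_{X|Y}(\cdot|y)-U\|_1=\E_Y\big[\|P_{X|Y}(\cdot|Y)-U\|_1\big],
\]
bounds the left side by $2\ln 2\cdot D(P_{XY}\|P_XP_Y)<2\ln 2\,\e$ via Pinsker (quoted in the paper in the linear form $\tfrac{1}{2\ln 2}\|P-Q\|_1\le D(P\|Q)$), and then runs Markov on the nonnegative variable $y\mapsto\|P_{X|Y}(\cdot|y)-U\|_1$ at the threshold $a=\e^{1/2}$. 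This delivers the radius $\e^{1/2}$ and the failure probability $2\ln 2\,\e^{1/2}$ in one stroke, with no further posterior-to-$L^1$ conversion needed.

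In your route you first apply Markov to the pointwise divergence $G(y)$ at a threshold $\theta$, and only afterwards convert $G<\theta$ to an $L^1$ bound via Pinsker, picking up an extra square root. That square root is precisely the $\e^{1/4}$-versus-$\e^{1/2}$ tension you isolate; it is an artifact of the ordering, not of the problem, and no sharper binary-specific refinement of Pinsker will remove it at that stage (the quadratic behavior $G\sim 2\delta^2/\ln 2$ near $U$ that you note is tight). Swap the two steps---bound $\E_Y[\|P_{X|Y}(\cdot|Y)-U\|_1]$ first, then apply Markov to the $L^1$ distance itself---and the difficulty disappears. As a side remark, the paper writes Pinsker without the square on $\|P-Q\|_1$; with the standard quadratic form $\|P-Q\|_1^2\le (2\ln 2)\,D(P\|Q)$ the same ordering gives $\E_Y[\|P_{X|Y}(\cdot|Y)-U\|_1]\le\sqrt{2\ln 2\,\e}$ and hence slightly different exponents than those stated, but the structure of the argument is unchanged.
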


\begin{proof}
Since $I(X;Y) < \e$, we have
$$D(P_{XY} || P_X P_Y) < \e$$
and from Pinsker's inequality
$$ \frac{1}{2 \ln 2} \| P-Q\|_1 \leq D(P||Q)$$
we get
$$ \| P_{XY} - P_X P_Y \|_1 = \sum_{y} P_Y (y) \| P_{X|Y}(\cdot|y) - U(\cdot) \|_1 \leq 2 \ln 2 \, \e.$$
Therefore, by Markov's inequality, we have
\begin{align*}
\Pr \{ y: \|  P_{X|Y}(\cdot|y) - U(\cdot) \|_1 \geq  a \} \leq \frac{ 2 \ln 2 \,\e}{a}
\end{align*}
and by choosing $a=\e^{1/2}$, we get the desired inequality. 
\end{proof}

\begin{lemma}
For two random variables $X,Y$ such that $X$ is binary uniform and $h(X|Y) < \e$, define $E_\e$ by
$$y \in E_\e \quad \Longleftrightarrow \quad \Pr \{ X=0 | Y=y \} \Pr \{ X=1 | Y=y \} \leq \e,$$
then $$\Pr \{ E_\e \} \geq 1- \gamma (\e),$$
with $\gamma(\e) \to 0$ when $\e \to 0$.
\end{lemma}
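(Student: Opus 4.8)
The plan is to compare the product $\Pr\{X=0|Y=y\}\Pr\{X=1|Y=y\}$ with the conditional entropy contributed by the symbol $y$, and then apply Markov's inequality to the nonnegative random variable $h(X|Y=y)$. Write $p(y):=\Pr\{X=0|Y=y\}$, so that $\Pr\{X=1|Y=y\}=1-p(y)$ and the quantity defining $E_\e$ is exactly $p(y)(1-p(y))$. Let $h_b(t):=-t\log t-(1-t)\log(1-t)$ denote the binary entropy function, which is nonnegative, symmetric about $1/2$, and strictly increasing on $[0,1/2]$. Then $h(X|Y=y)=h_b(p(y))$, and hence $h(X|Y)=\E_Y[h_b(p(Y))]$, which is assumed to be less than $\e$. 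We may also assume $\e<1/4$, since otherwise $p(y)(1-p(y))\le 1/4\le\e$ for all $y$ and $E_\e$ has probability $1$, making the claim trivial.

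The key step is a pointwise comparison: if $y\notin E_\e$, i.e.\ $p(y)(1-p(y))>\e$, then $h_b(p(y))$ is bounded below by an explicit function of $\e$. Set $q:=\min(p(y),1-p(y))\in(0,1/2]$. Since $q(1-q)=p(y)(1-p(y))>\e$ and $1-q<1$, we obtain $q>q(1-q)>\e$. By the monotonicity of $h_b$ on $[0,1/2]$ this gives $h_b(p(y))=h_b(q)>h_b(\e)$, so that $E_\e^c\subseteq\{y:h_b(p(y))>h_b(\e)\}$.

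Markov's inequality then yields
\begin{align*}
\Pr\{E_\e^c\}\le\Pr\{h_b(p(Y))>h_b(\e)\}\le\frac{\E_Y[h_b(p(Y))]}{h_b(\e)}=\frac{h(X|Y)}{h_b(\e)}<\frac{\e}{h_b(\e)},
\end{align*}
so we may take $\gamma(\e):=\e/h_b(\e)$. It then remains to verify $\gamma(\e)\to 0$ as $\e\to 0$: since $h_b(\e)=\e\log(1/\e)+O(\e)\sim\e\log(1/\e)$, we have $h_b(\e)/\e\to\infty$, whence $\gamma(\e)\sim 1/\log(1/\e)\to 0$.

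The only delicate point is precisely this decay rate, and it is where I would focus the verification. The argument succeeds because the $-\e\log\e$ term in $h_b(\e)$ decays strictly more slowly than $\e$, which is exactly what forces $\e/h_b(\e)\to 0$. A cruder comparison such as $h_b(p)\ge c\,p(1-p)$ would only produce $\gamma(\e)=O(1)$ and would not suffice; it is essential to retain the logarithmic factor arising from the behaviour of the binary entropy near the endpoints of $[0,1]$, which is what makes the conditional entropy a much sharper witness of near-determinism than the variance-type quantity $p(1-p)$.
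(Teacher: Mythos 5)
Your proof is correct, and it supplies something the paper itself omits: the paper gives no argument for this lemma, remarking only that it ``is given as a problem in \cite{gallager}.'' So there is no internal proof to compare against, and your write-up fills a genuine gap. Your route --- express the defining quantity as $p(y)(1-p(y))$, observe that $h(X|Y=y)=h_b(p(y))$ for the binary entropy $h_b$, show pointwise that $p(y)(1-p(y))>\varepsilon$ forces $h_b(p(y))>h_b(\varepsilon)$ via monotonicity on $[0,1/2]$, and finish with Markov's inequality applied to $\E[h_b(p(Y))]=h(X|Y)<\varepsilon$ --- is sound at every step: the reduction $q>q(1-q)>\varepsilon$ is valid because $y\notin E_\varepsilon$ forces $q>0$, the case $\varepsilon\geq 1/4$ is correctly dismissed as trivial, and the decay $\gamma(\varepsilon)=\varepsilon/h_b(\varepsilon)\sim 1/\log(1/\varepsilon)\to 0$ is right. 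It is also stylistically consistent with the paper, whose companion lemma just above is proved by exactly this pattern (a pointwise information-theoretic bound, Pinsker's inequality there, followed by Markov). Two further merits worth noting: you obtain an explicit rate $\gamma(\varepsilon)=\varepsilon/h_b(\varepsilon)$ rather than the merely qualitative $\gamma(\varepsilon)\to 0$ asserted in the statement, and your closing observation is accurate --- a linear comparison $h_b(p)\geq c\,p(1-p)$ would only bound $\Pr\{E_\varepsilon^c\}$ by a constant, so retaining the logarithmic blow-up of $h_b$ near the endpoints is indeed what makes the argument close.
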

This lemma tells us that if $\Pr \{ X=0 | Y=y \}$ is not small, we must have that $\Pr \{ X=1 | Y=y \}$ is small with high probability. It is given as a problem in \cite{gallager}.

\section{Extensions}

\subsection{$q$-ary Matroids}
The results of last sections are expected to generalize to the $q$-ary alphabet case, where $q$ is a prime or power of prime.
In particular, we have the following. 

{\bf Claim:}
A matroid is $q$-ary representable if and only if its rank function is given by the UMIF of a MAC with $q$-ary inputs.

Hence, one could equivalently characterize $q$-ary matroids by characterizing rank functions which are representable by $q$-ary alphabets MAC. 

\subsection{Entropic matroids}\label{entmat}
The following result can be found in \cite{fuji}.
\begin{lemma}
Let $Z[E_m]$ be an $m$-dimensional random vector with arbitrary distribution over $\F_q^m$.
Then the function $r:S \mapsto H(Z[S])$ is a $\beta$-rank function and $(E_m,r)$ is a polymatroid.
\end{lemma}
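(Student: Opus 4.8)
The plan is to verify directly that the set function $r(S) = H(Z[S])$ satisfies the three defining properties (R1), (R2), (R3) of a $\beta$-rank function; since the ground alphabet $\F_q^m$ is finite, every entropy that appears is finite and all the standard identities of Shannon entropy (chain rule, nonnegativity of conditional entropy, nonnegativity of conditional mutual information) are available without qualification. Properties (R1) and (R2) are immediate. For (R1), $Z[\emptyset]$ is a degenerate (empty) collection of coordinates, so $r(\emptyset) = H(Z[\emptyset]) = 0$. For (R2), if $S \subseteq T$ then the chain rule gives $H(Z[T]) = H(Z[S]) + H(Z[T \setminus S] \mid Z[S])$, and since the conditional entropy is nonnegative we obtain $r(S) = H(Z[S]) \leq H(Z[T]) = r(T)$.

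The only substantive step is (R3), the submodularity of entropy. I would set up the disjoint decomposition $A = S \cap T$, $B = S \setminus T$, $C = T \setminus S$, so that $S = A \sqcup B$, $T = A \sqcup C$, $S \cup T = A \sqcup B \sqcup C$, and $S \cap T = A$. Applying the chain rule to $H(Z[S \cup T])$ and to $H(Z[S])$, the target inequality $H(Z[S \cup T]) + H(Z[A]) \leq H(Z[S]) + H(Z[T])$ rearranges into
\[
H(Z[B] \mid Z[A \cup C]) \leq H(Z[B] \mid Z[A]),
\]
i.e.\ the statement that conditioning on the additional variables $Z[C]$ cannot increase the entropy of $Z[B]$. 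Quantitatively the gap between the two sides equals $I(Z[B]; Z[C] \mid Z[A])$, so the inequality follows from the nonnegativity of conditional mutual information. This establishes (R3) and hence shows $(E_m, r)$ is a polymatroid.

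The ``main obstacle'' here is essentially bookkeeping rather than any analytic difficulty: the work is in choosing the disjoint decomposition $A, B, C$ and applying the chain rule on both sides so that the submodularity inequality collapses cleanly to the nonnegativity of a single conditional mutual information. Finiteness of the alphabet $\F_q^m$ removes any concern about infinite or ill-defined entropies, so no separate integrability or limiting argument is needed. (This is a classical fact, recorded in \cite{fuji}; the argument above simply makes the reduction to the standard Shannon inequalities explicit.)
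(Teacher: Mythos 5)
Your proof is correct. Note that the paper does not actually prove this lemma---it records it as a known result and cites Fujishige---so there is no in-paper argument to diverge from; your direct verification of (R1)--(R3), with submodularity reduced via the chain rule to the nonnegativity of the conditional mutual information $I(Z[B];Z[C]\mid Z[A])$, is exactly the standard argument that the citation stands for, and it fills the omitted proof cleanly.
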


Hence, we can define a notion of {\it entropic matroid}, which is a matroid whose rank function is representable by an entropic function as above. 

We now show that entropic matroids can be studied as specific cases of MAC matroids.
Consider a specific MAC which consist of an additive noise perturbation of the input, i.e.,
$$Y[E_m] = X[E_m] \oplus Z[E_m],$$
where $X[E_m]$ is an $m$-dimensional random vector with i.i.d.\ uniform components over $\F_q$ and 
$Z[E_m]$ is an $m$-dimensional random vector of arbitrary distribution on $\F_q^m$, independent of $X[E_m]$. 
Then,
\begin{align*}
I(X[S]; Y X[S^c] ) & = |S| - H(X[S] | Y X[S^c]) \\
& = |S| - H(Y[S] \ominus Z[S] | Y, (Y[S^c]\ominus Z[S^c])) \\
& = |S| - H(Z[S] | Y, Z[S^c]) \\
& = |S| - H(Z[S] |Z[S^c])\\
& = H(Z[S]) + |S|- H(Z[E_m]) .
\end{align*} 
Hence, an entropic matroid corresponds to a particular case of MAC matroid which has additive noise. 









\end{document}